 \newtheorem{theorem}{Theorem}[section]
 \newtheorem{proposition}[theorem]{Proposition}
 \newtheorem{corollary}[theorem]{Corollary}
\newtheorem{lemma}{Lemma}
\newtheorem{definition}{Definition}
\begin{document}
\title{Throughput-Optimal Broadcast in Wireless Networks with Dynamic Topology} 
\numberofauthors{3}
    \author{
      \alignauthor Abhishek Sinha\\ 
      \affaddr{Laboratory for Information and Decision Systems}\\
      \affaddr{MIT}     
      \email{sinhaa@mit.edu}
  \and \and     \alignauthor Leandros Tassiulas\\
      \affaddr{Electrical Engg. and Yale Institute of Network Science\\ }
      \affaddr{Yale University} \\    
   \email{leandros.tassiulas@yale.edu}
 \and \and 
     \alignauthor Eytan Modiano\\  
     \affaddr{Laboratory for Information and Decision Systems}\\
      \affaddr{MIT}     
      \email{modiano@mit.edu}  
%
          }
\maketitle

%
\begin{abstract}
We consider the problem of throughput-optimal broadcasting in time-varying wireless networks, whose underlying topology is restricted to Directed Acyclic Graphs (DAG). Previous broadcast algorithms route packets along spanning trees. In large networks with time-varying connectivities, these trees are difficult to compute and maintain. In this paper we propose a new online throughput-optimal broadcast algorithm which makes packet-by-packet scheduling and routing decisions, obviating the need for maintaining any global topological structures, such as spanning-trees. Our algorithm relies on system-state information for making transmission decisions and hence, may be thought of as a generalization of the well-known \emph{back-pressure algorithm} which makes point-to-point unicast transmission decisions based on queue-length information, without requiring knowledge of end-to-end paths. Technically, the back-pressure algorithm is derived by stochastically stabilizing the network-queues. However, because of packet-duplications associated with broadcast, the work-conservation principle is violated and \emph{queuing processes} are difficult to define in the broadcast problem. To address this fundamental issue, we identify certain state-variables which behave like \emph{virtual queues} in the broadcast setting. 
By stochastically stabilizing these virtual queues, we devise a throughput-optimal broadcast policy. We also derive new characterizations of the broadcast-capacity of time-varying wireless DAGs and derive an efficient algorithm to compute the capacity exactly under certain assumptions, and a poly-time approximation algorithm for computing the capacity under less restrictive assumptions.
\end{abstract}
\section{Introduction} \label{intro-section}
The problem of efficiently disseminating packets, arriving at a source node, to a subset of nodes in a network, is known as the \emph{Multicast problem}. In the special case when the packets are to be distributed among all nodes, the corresponding problem is referred to as the \emph{Broadcast problem}.  Multicasting and broadcasting is considered to be a fundamental network functionality, which enjoys numerous practical applications ranging from military communications \cite{milcom_app}, disaster management using mobile adhoc networks (MANET) \cite{ge2004overlay}, to streaming services for live web television \cite{ip_tv_multicast} etc. \\
There exists a substantial body of literature addressing different aspects of this problem in various networking settings. An extensive survey of various multicast routing protocols for MANET is provided in \cite{multicast_survey}. The authors of \cite{gandhi2003minimizing} consider the problem of minimum latency broadcast of a finite set of messages in MANET. This problem is shown to be NP-hard. To address this issue, several approximation algorithms are proposed in \cite{huang2007minimum}, all of which rely on construction of certain network-wide broadcast-trees. Cross-layer solutions for multi-hop multicasting in wireless network are given in \cite{yuan2006cross} and \cite{Ho2005}. These algorithms involve network coding, which introduces additional complexity and exacerbates end-to-end delay. The authors of \cite{swati} propose a multicast scheduling and routing protocol which balances load among a set of pre-computed spanning trees, which are challenging to compute and maintain in a scalable fashion. The authors of \cite{towsley2008rate} propose a local control algorithm for broadcasting in a wireless network for the so called \emph{scheduling-free model}, in which an oracle is assumed to make interference-free scheduling decisions. This assumption, as noted by the authors themselves, is not practically viable.\\
In this paper we build upon the recent work of \cite{sinha_DAG} and consider the problem of throughput-optimal broadcasting in a wireless network with time-varying connectivity. Throughout the paper, the overall network-topology will be restricted to a directed acyclic graph (DAG). We first characterize the broadcast-capacity of time-varying wireless networks and propose an exact and an approximation algorithm to compute it efficiently. Then we propose a dynamic link-activation and packet-scheduling algorithm that, unlike any previous algorithms, obviates the need to maintain any global topological structures, such as spanning trees, yet achieves the capacity. In addition to throughput-optimality, the proposed algorithm enjoys the attractive property of \emph{in-order} packet-delivery, which makes it particularly useful in various online applications, e.g. VoIP and live multimedia communication \cite{chu2001enabling}. Our algorithm is model-oblivious in the sense that its operation does not rely on detailed statistics of the random arrival or network-connectivity processes. We also show that the throughput-optimality of our algorithm is retained when the control decisions are made using \emph{locally} available and possibly imperfect, state information. \\
Notwithstanding the vast literature on the general topic of broadcasting, to the best of our knowledge, this is the first work addressing throughput-optimal broadcasting in time-varying wireless networks with store and forward routing. Our main technical contributions are the following: 
\begin{itemize}
\item We define the broadcast-capacity for wireless networks with time-varying connectivity and characterize it mathematically and algorithmically. We show that broadcast-capacity of time-varying wireless directed acyclic networks can be computed efficiently under some assumptions. We also derive a tight-bound for the capacity for a general setting and utilize it to derive an efficient approximation algorithm to compute it.
\item We propose a throughput-optimal dynamic routing and scheduling algorithm for broadcasting in a wireless DAGs with time-varying connectivity. This algorithm is of \emph{Max-Weight} type and uses the idea of \emph{in-order} delivery to simplify its operation. To the best of our knowledge, this is the first throughput-optimal dynamic algorithm proposed for the broadcast problem in wireless networks. 
\item We extend our algorithm to the setting when the nodes have access to infrequent state updates. We show that the throughput-optimality of our algorithm is preserved even when the rate of inter-node communication is made arbitrarily small.
\item We illustrate our theoretical findings through illustrative numerical simulations.
\end{itemize}
The rest of the paper is organized as follows. Section \ref{sys_model} introduces the wireless network model. Section \ref{capacity_section} defines and characterizes the broadcast capacity of a wireless DAG. It also provides an exact and an approximation algorithm to compute the broadcast-capacity. Section \ref{optimal_policy_section} describes our capacity-achieving broadcast algorithm for DAG networks. Section \ref{intermittent_connectivity} extends the algorithm to the setting of broadcasting with imperfect state information. Section \ref{simulation_section} provides numerical simulation results to illustrate our theoretical findings. Finally, in section \ref{conclusion} we summarize our results and conclude the paper. 
\section{Network Model} \label{sys_model}
First we describe the basic wireless network model without time-variation. Subsequently, we will incorporate time-variation in the basic model. A static wireless network is modeled by a directed graph $\mathcal{G}=(V,E,\bm{c},\mathcal{M})$, where $V$ is the set of nodes, $E$ is the set of \emph{directed} point-to-point links\footnote{We assume all transmit and receiving antennas to be directed and hence all transmissions to be point-to-point \cite{beygelzimer2008benefits}.}, the vector $\bm{c} = (c_{ij})$ denotes  capacities of the edges when the corresponding links are activated and $\mathcal{M}\subset \{0,1\}^{|E|}$ is the set of incidence-vectors corresponding to all feasible link-activations complying with the interference-constraints. The structure of the activation-set $\mathcal{M}$ depends on the interference model, e.g., under the primary or node-exclusive interference model \cite{shroff-tutorial}, $\mathcal{M}$ corresponds to the set of all \emph{matchings} on the graph $\mathcal{G}$.  There are a total of $|V|=n$ nodes and $|E|=m$ edges in the network. Time is slotted and at time-slot $t$, any subset of links complying with the underlying interference-constraint may be activated. At most $c_{ij}$ packets can be transmitted in a slot from node $i$ to node $j$, when link $(i,j)$ is activated. \\ 
Let $\texttt{r}\in V$ be the \emph{source} node. At slot $t$, $A(t)$ packets arrive at the source. The arrivals, $A(t)$, are i.i.d. over slots with mean $\mathbb{E}(A(t))=\lambda$.
Our problem is to efficiently disseminate the packets to all nodes in the network.
\subsection{Notations and Nomenclature:} All vectors in this paper are assumed to be column vectors. For any set $\mathcal{X} \subset \mathbb{R}^k$, its convex-hull is denoted by $\mathrm{conv}(\mathcal{X})$. Let $\big(U, V\setminus U\big)$ be a disjoint partition of the set of vertices $V$ of the graph $\mathcal{G}$, such that the source $\texttt{r} \in U$ and $U \subsetneq V$. Such a partition is called a \emph{proper-partition}. To each proper partition corresponding to the set $U$, associate the \emph{proper-cut} vector $\bm{u} \in \mathbb{R}^m$, defined as follows:
\begin{eqnarray} \label{cut_vector}
 \bm{u}_{i,j}&=&c_{i,j} \hspace{10pt} \mathrm{if}\hspace{3pt} i\in U, j \in V\setminus U\\
 &=& 0 \hspace{17pt} \mathrm{otherwise}  \nonumber 
\end{eqnarray}
Denote the special, single-node proper-cuts by $U_j\equiv V\setminus \{j\}$, and the corresponding cut-vectors by $\bm{u}_j$, $\forall j \in V\setminus \{\texttt{r}\}$. The set of all proper-cut vectors in the graph $\mathcal{G}$ is denoted by $\mathcal{U}$. \\
The \emph{in-neighbours} of a node $j$ is defined as the set of all nodes $i\in V$ such that there is a directed edge $(i,j)\in E$. It is denoted by the set $\partial^{\mathrm{in}}(j)$, i.e.,
\begin{eqnarray}
\partial^{\mathrm{in}}(j)= \{i \in V: (i,j)\in E\}
\end{eqnarray}
Similarly, we define the \emph{out-neighbours} of a node $j$ as follows 
\begin{eqnarray}
\partial^{\mathrm{out}}(j) =\big\{i \in V: (j,i)\in E\big\}
\end{eqnarray} 
For any two vectors  $\bm{x}$ and $\bm{y}$ in $\mathbb{R}^m$, define the component-wise product $\bm{z}\equiv \bm{x} \odot \bm{y}$ to be a vector in $\mathbb{R}^m$ such that $z_i= x_iy_i, 1\leq i \leq m$. 

For any set $\mathcal{S}\subset \mathbb{R}^m$ and any vector $\bm{v} \in \mathbb{R}^m$, $\bm{v} \odot \mathcal{S}$, denotes the set of vectors obtained as the component-wise product of the vector $\bm{v}$ and the elements of the set $\mathcal{S}$, i.e.,  
\begin{eqnarray}
\bm{v}\odot \mathcal{S}=\big\{ \bm{y}\in \mathbb{R}^m: \bm{y}=\bm{v}\odot \bm{s}, \bm{s}\in \mathcal{S}\big\}
\end{eqnarray} 

Also, the usual dot product between two vectors $\bm{x}, \bm{y} \in \mathbb{R}^m$ is defined as, 
\begin{eqnarray*}
	\bm{x}\cdot \bm{y} = \sum_{i=1}^{m} x_iy_i
\end{eqnarray*}

\subsection{Model of Time-varying Wireless Connectivity} \label{model}
Now we incorporate time-variation into our basic framework described above. In a wireless network, the channel-SINRs vary with time because of random fading, shadowing and mobility \cite{tse2005fundamentals}. To model this, we consider a simple ON-OFF model where an individual link can be in one of the two states, namely \textsc{ON} and \textsc{OFF}. In an OFF state, the capacity of a link is zero \footnote{Generalization of the ON-OFF model, to multi-level discretization of link-capacity is straight-forward.}. Thus at a given time, the network can be in any one configuration, out of the set of all possible network configurations $\Xi$. Each element $\sigma \in \Xi$ corresponds to a sub-graph $\mathcal{G}(V,E_\sigma) \subset \mathcal{G}(V,E) $, with $E_\sigma \subset E$, denoting the set of links that are ON. At a given time-slot $t$, one of the configuration $\sigma(t) \in \Xi$  is realized. The configuration at time $t$ is represented by the vector $\bm{\sigma}(t) \in \{0,1\}^{|E|}$, where
\begin{eqnarray*}
\bm{\sigma}(e,t)=\begin{cases}
	1,  \hspace{5pt}\text{if}\hspace{5pt} e \in E_{\bm{\sigma}(t)}\\
0, \hspace{10pt}\text{otherwise}.
\end{cases}
\end{eqnarray*}
At a given time-slot $t$, the network controller may activate a set of non-interfering links that are ON.  \\
 The network-configuration process $\{\bm{\sigma}(t)\}_{t\geq 1}$ evolves in discrete-time according to a stationary ergodic process with the stationary distribution $\{p(\sigma)\}_{\sigma \in \Xi}$ \cite{Kamthe:2013:IWL:2555947.2529991}, where
\begin{eqnarray}
\sum_{\sigma \in \Xi}p(\sigma)=1, \hspace{5pt} p(\sigma)>0, \hspace{5pt}\forall \sigma \in \Xi
\end{eqnarray} 

 Since the underlying physical processes responsible for time-variation are often spatially-correlated \cite{agrawal2009correlated}, \cite{patwari2008effects}, the  distribution of the link-states is assumed to follow an arbitrary joint-distribution. The detailed parameters of this process depend on the ambient physical environment, which is often difficult to measure. In particular, it is unrealistic to assume that the broadcast-algorithm has knowledge of the parameters of the process $\bm{\sigma}(t)$. Fortunately, our proposed dynamic throughput-optimal broadcast algorithm does not require the statistical characterization of the configuration-process $\bm{\sigma}(t)$ or  its stationary-distribution $p(\bm{\sigma})$. This makes our algorithm robust and suitable for use in time-varying wireless networks. 
\section{Definition and Characterization of Broadcast Capacity} \label{capacity_section}
Intuitively, a network supports a broadcast rate $\lambda$ if there exists a scheduling policy under which all network nodes receive distinct packets at  rate $\lambda$. The broadcast-capacity of a network is the maximally supportable broadcast rate by any policy. 
Formally, we consider a class $\Pi$ of scheduling policies where each policy $\pi\in\Pi$ consists of a sequence of actions $\{\pi_t\}_{t\geq 1}$, executed at every slot $t$. Each action $\pi_{t}$ consists of two operations: 
\begin{itemize}
\item The scheduler observes the current network-configuration $\sigma(t)$ and activates a subset of links by choosing a feasible activation vector  $\boldsymbol s(t)\in\mathcal{M}_{\sigma(t)}$. Here $\mathcal{M}_\sigma$ denotes the set of all feasible link-activation vectors in the sub-graph $\mathcal{G}(V,E_\sigma)$, complying with the underlying interference constraints. As an example, under the primary interference constraint, $\mathcal{M}_\sigma$ is given by the set of all \emph{matchings} \cite{diestel2005graph} of the sub-graph $\mathcal{G}(V,E_\sigma)$.\\
Analytically, elements from the set $\mathcal{M}_\sigma$ will be denoted by their corresponding $|E|$-dimensional binary incidence-vectors, whose component corresponding to edge $e$ is identically zero if $e \notin E_\sigma$.
\item Each node $i$ forwards a subset of packets (possibly empty) to node $j$ over an activated link $(i, j) \in \bm{\sigma}(t)$, subject to the link capacity constraint. The class $\Pi$ includes policies that may use all past and future information, and may forward any subset of packets over a link, subject to the link-capacity constraint. 
\end{itemize}
  
To formally introduce the notion of broadcast capacity, we define the random variable $R_i^{\pi}(T)$ to be the number of distinct packets received by node $i \in V$ up to time $T$, under a policy $\pi\in \Pi$. The time average $\liminf_{T\to \infty} R^{\pi}_i(T)/T$ is  the rate of packet-reception at  node $i$.
 
\begin{definition}
A policy $\pi \in \Pi$ is called a 
{``broadcast policy of rate $\lambda$''} 
if 
all nodes receive distinct packets at rate $\lambda$, i.e.,
\begin{eqnarray} \label{bcdef}
\min_{i\in V} \liminf_{T\to \infty} \frac{1}{T} R^{\pi}_i(T)= \lambda, \hspace{10pt} \mathrm{w.p.}\hspace{3pt} 1
\end{eqnarray}
where $\lambda$ is the packet arrival rate at the source node $\texttt{r}$.
\end{definition}
\begin{definition} \label{capacity_def}
The broadcast capacity $\lambda^*$ of a network is defined to be the supremum of all arrival rates $\lambda$, for which there exists a broadcast policy $\pi\in\Pi$ of rate $\lambda$.
\end{definition}
In the following subsection, we derive an upper-bound on broadcast-capacity, which immediately follows from the previous definition. 

\subsection{An Upper-bound on Broadcast Capacity} \label{broadcast_ub_proof}

 Consider a policy $\pi \in \Pi$ that achieves a broadcast rate of at least  $\lambda^* -\epsilon$, for an $\epsilon >0$. Such a policy $\pi$ exists due to the definition of the broadcast capacity $\lambda^{*}$ in Definition \ref{capacity_def}.

Now consider any proper-cut $U$ of the network $\mathcal{G}$. By definition of a proper-cut, there exists a node $i \notin U$. Let $\bm{s}^{\pi}(t, \bm{\sigma}(t)) = (s_{e}^{\pi}(t), e\in E)$ be the link-activation vector chosen by policy $\pi$ in slot $t$, upon observing the current-configuration $\bm{\sigma}(t)$. The maximum number of packets that can be transmitted across the cut $U$ in slot $t$ is upper-bounded by the total capacity of all activated links across the cut-set $U$, which is given by $\sum_{e\in E_{U}} c_{e} s_{e}^{\pi}(t,\bm{\sigma}(t))$. Hence, the number of distinct packets received by node $i$ by time $T$ is upper-bounded by the total available capacity across the cut $U$ up to time $T$, subject to link-activation decisions of the policy $\pi$. In other words, we have

\begin{eqnarray} \label{bound_packet}
R_i^{\pi}(T) \leq \sum_{t=1}^{T} \sum_{e\in E_{U}} c_{e} s_{e}^{\pi}(t, \bm{\sigma}(t)) 
= \bm{u}\cdot \sum_{t=1}^{T} \bm{s}^{\pi}(t, \bm{\sigma}(t))
\end{eqnarray}
i.e., 
\begin{eqnarray*}
\frac{R_i^{\pi}(T)}{T} \leq \bm{u}\cdot\bigg(\frac{1}{T} \sum_{t=1}^{T}\bm{s}^{\pi}(t,\bm{\sigma}(t))\bigg),
\end{eqnarray*}
where the cut-vector $\bm{u}\in \mathbb{R}^m$, corresponds to the cut-set $U$, as in Eqn.\eqref{cut_vector}. It follows that,
\begin{eqnarray} 
\lambda^*-\epsilon  &\stackrel{(a)}{\leq}& \min_{j\in V} \liminf_{T\to \infty} \frac{R_j^{\pi}(T)}{T} \nonumber \leq  \liminf_{T\to \infty} \frac{R_i^{\pi}(T)}{T} \\
&\leq&  \liminf_{T\to \infty}\bm{u}\cdot\bigg(\frac{1}{T} \sum_{t=1}^{T}\bm{s}^{\pi}(t,\bm{\sigma}(t))\bigg), \label{bound2}
\end{eqnarray}
where (a) follows from the fact that $\pi$ is a broadcast policy of rate at least $\lambda^*-\epsilon$. Since the above inequality holds for all proper-cuts $\bm{u}$, we have 
\begin{eqnarray} \label{bound3}
\lambda^*-\epsilon \leq \min_{\bm{u} \in \mathcal{U}}  \liminf_{T\to \infty}\bm{u}\cdot\bigg(\frac{1}{T} \sum_{t=1}^{T}\bm{s}^{\pi}(t,\bm{\sigma}(t))\bigg)
\end{eqnarray}

The following technical lemma will prove to be useful for deriving an upper-bound on the broadcast-capacity.
\begin{framed}
\begin{lemma} \label{conv_hull}
 For any policy $\pi \in \Pi$, and any proper-cut vector $\bm{u}$,  there exists a collection of vectors $\big(\bm{\beta}_{\sigma}^{\pi} \in \mathrm{conv}(\mathcal{M}_\sigma)\big)_{\sigma\in \Xi}$, such that, the following holds w.p. $1$
\begin{eqnarray*}
\min_{\bm{u} \in \mathcal{U}}\liminf_{T\to \infty}  \bm{u}\cdot\bigg(\frac{1}{T} \sum_{t=1}^{T}\bm{s}^{\pi}(t,\bm{\sigma}(t))\bigg) \\
= \min_{\bm{u}\in \mathcal{U}}\bm{u}\cdot \bigg(\sum_{\sigma \in \Xi}p(\sigma)\bm{\beta}_\sigma^{\pi}\bigg)
\end{eqnarray*}
\end{lemma}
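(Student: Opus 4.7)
My plan is to decompose the empirical link-activation process by conditioning on the current network configuration, and then extract a common subsequence along which everything simultaneously converges. Let $\bm{x}_T := \tfrac{1}{T}\sum_{t=1}^T \bm{s}^\pi(t,\bm{\sigma}(t))$, and for each $\sigma \in \Xi$ let $T_\sigma(T)$ denote the number of slots up to time $T$ in configuration $\sigma$. I define the conditional empirical activation
\begin{eqnarray*}
\bm{\beta}_\sigma^\pi(T) := \frac{1}{T_\sigma(T)}\sum_{t\leq T:\, \bm{\sigma}(t)=\sigma} \bm{s}^\pi(t,\sigma),
\end{eqnarray*}
which, being a convex combination of points in $\mathcal{M}_\sigma$, lies in $\mathrm{conv}(\mathcal{M}_\sigma)$. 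One then has the tautology $\bm{x}_T = \sum_{\sigma\in\Xi} (T_\sigma(T)/T)\,\bm{\beta}_\sigma^\pi(T)$, and Birkhoff's ergodic theorem applied to the stationary ergodic configuration process $\{\bm{\sigma}(t)\}$ yields $T_\sigma(T)/T \to p(\sigma)$ for every $\sigma$, w.p.~$1$.

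Next, since the set $\mathcal{U}$ of proper cuts is finite, I pick a $\bm{u}^\star \in \mathcal{U}$ attaining the minimum on the left-hand side, so that the left-hand side equals $\liminf_T \bm{u}^\star \cdot \bm{x}_T$. By definition of $\liminf$ there exists an integer subsequence along which $\bm{u}^\star \cdot \bm{x}_T$ converges to this value. Because $\Xi$ is finite and each $\mathrm{conv}(\mathcal{M}_\sigma) \subset [0,1]^{|E|}$ is compact, a Bolzano--Weierstrass diagonal extraction refines this to a further subsequence $\{T_k\}$ along which $\bm{\beta}_\sigma^\pi(T_k)$ converges to some limit $\bm{\beta}_\sigma^\pi \in \mathrm{conv}(\mathcal{M}_\sigma)$ for \emph{every} $\sigma$ simultaneously. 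I take this collection $\bigl(\bm{\beta}_\sigma^\pi\bigr)_{\sigma\in\Xi}$ as the witness asserted by the lemma.

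Combining the above convergences with the a.s.\ convergence of the configuration frequencies gives $\bm{x}_{T_k} \to \bm{z} := \sum_\sigma p(\sigma)\,\bm{\beta}_\sigma^\pi$, so the left-hand side equals $\bm{u}^\star \cdot \bm{z}$, which is at least $\min_{\bm{u}\in\mathcal{U}} \bm{u}\cdot \bm{z}$, i.e., the right-hand side. For the reverse inequality, for any $\bm{u}\in \mathcal{U}$ I use $\liminf_T \bm{u}\cdot \bm{x}_T \leq \lim_k \bm{u}\cdot \bm{x}_{T_k} = \bm{u}\cdot \bm{z}$ and then minimize over $\bm{u}\in\mathcal{U}$, which gives the opposite inequality and closes the argument.

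The only genuine obstacle I anticipate is that neither $\bm{x}_T$ nor the individual $\bm{\beta}_\sigma^\pi(T)$ need converge, so the proof lives or dies on constructing a \emph{single} subsequence along which (i) the chosen cut minimizer attains its $\liminf$, (ii) every conditional empirical activation stabilizes, and (iii) the ergodic configuration frequencies converge, so that both directions of the equality can be read off the same limiting object $\bm{z}$.
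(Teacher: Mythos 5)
Your proof is correct and follows essentially the same route as the paper's: decompose the empirical activation average by conditioning on the current configuration, invoke ergodicity to get $T_\sigma(T)/T \to p(\sigma)$ w.p.~$1$, and use compactness of $\mathrm{conv}(\mathcal{M}_\sigma)$ to extract subsequential limits that serve as the witnesses $\bm{\beta}_\sigma^{\pi}$. If anything, your insistence on a single common subsequence along which the minimizing cut attains its $\liminf$, combined with the explicit two-sided inequality, executes the limiting step a bit more carefully than the paper's interchange of the $\liminf$ with the sum over $\sigma$.
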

\end{framed}

%
%
%
%
%
%
The above lemma essentially replaces the minimum cut-set bound of an arbitrary activations in \eqref{bound3}, by the minimum cut-set bound of a stationary randomized activation, which is easier to handle.
Combining Lemma~\ref{conv_hull} with Eqn.~\eqref{bound3}, we conclude that for the policy $\pi \in \Pi$, there exists a collection of vectors $\{\bm{\beta}^\pi_{\sigma} \in \mathrm{conv}(\mathcal{M}_\sigma)\}_{\sigma \in \Xi}$ such that
\begin{equation} \label{bdcutC}
\lambda^*-\epsilon \leq \min_{\bm{u} \in \mathcal{U}} \bm{u}\cdot\bigg( \sum_{\sigma \in \Xi} p(\bm{\sigma})\bm{\beta}^{\pi}_{\sigma} \bigg)
\end{equation} 

Maximizing the RHS of Eqn. \eqref{bdcutC} over all vectors $\big\{\bm{\beta}_{\sigma} \in \mathrm{conv}(\mathcal{M}_\sigma),$ $\sigma \in \Xi\big\}$ and letting $\epsilon \searrow 0$, we have the following universal upper-bound on the broadcast capacity $\lambda^*$
\begin{eqnarray} \label{bc_ob}
\lambda^* \leq \max_{\bm{\beta}_{\sigma} \in \text{conv}(\mathcal{M}_\sigma)}\min_{\bm{u} \in \mathcal{U}} \bm{u}\cdot\bigg( \sum_{\sigma \in \Xi} p(\bm{\sigma})\bm{\beta}_{\sigma} \bigg)
\end{eqnarray}
Specializing the above bound for single-node cuts of the form $\bm{U}_j=(V\setminus \{j\}) \to \{j\}, \forall j \in V \setminus \{\texttt{r}\}$, we have the following upper-bound 
\begin{eqnarray}\label{cap_bd}
\lambda^* \leq \max_{\bm{\beta}_{\sigma} \in \text{conv}(\mathcal{M}_\sigma)}\min_{j \in V\setminus \{\texttt{r}\}} \bm{u}_j\cdot\bigg( \sum_{\sigma \in \Xi} p(\bm{\sigma})\bm{\beta}_{\sigma} \bigg)
\end{eqnarray}
It will be shown in Section \ref{optimal_policy_section} that in a DAG, our throughput-optimal policy $\pi^*$ achieves a broadcast-rate equal to the RHS of the bound \eqref{cap_bd}. Thus we have the following theorem 
\begin{framed}
\begin{theorem} \label{cap_th}
The broadcast-capacity $\lambda^*_{\mathrm{DAG}}$ of a time-varying wireless DAG 
is given by: 
\begin{eqnarray} \label{capacity_expr}
\lambda^*_{\mathrm{DAG}} = \max_{\bm{\beta}_{\sigma} \in \emph{conv}(\mathcal{M}_\sigma), \sigma \in \Xi }\min_{j \in V\setminus \{\texttt{r}\}} \bm{u}_j\cdot\bigg( \sum_{\sigma \in \Xi} p(\bm{\sigma})\bm{\beta}_{\sigma} \bigg)
\end{eqnarray}
\end{theorem}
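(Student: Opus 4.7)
The upper bound $\lambda^{*}_{\mathrm{DAG}}\le \mathrm{RHS}$ has already been derived in equations \eqref{bc_ob}--\eqref{cap_bd} using Lemma~\ref{conv_hull}, so what remains is to prove achievability: exhibit a policy $\pi^{*}$ whose broadcast rate matches the RHS of \eqref{capacity_expr} for every arrival rate $\lambda$ strictly below it. My plan is to construct $\pi^{*}$ as a Max-Weight type algorithm and establish its optimality via Lyapunov-drift analysis on a suitable family of virtual queues, as foreshadowed in the abstract.

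The first step is to define the virtual state variables. For each non-source node $j\in V\setminus\{\texttt{r}\}$, let $R_j^{\pi}(t)$ be the number of distinct packets received by $j$ under policy $\pi$ by slot $t$, and let $A(t)$ denote the cumulative exogenous arrivals by slot $t$. The deficit $Q_j(t)=A(t)-R_j^{\pi}(t)$ will play the role of a \emph{virtual queue} associated with the single-node cut $\bm{u}_j$. Because the DAG structure permits an \emph{in-order} delivery discipline (packets are indexed and forwarded in topological order), $R_j^{\pi}(t)$ is well defined despite packet duplication, and $Q_j(t)$ satisfies a one-step recursion that mimics a standard queueing recursion without suffering from the flow-conservation violation that usually obstructs broadcast throughput analysis.

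Next, I would define $\pi^{*}$ as the policy that, given the observed configuration $\bm{\sigma}(t)$ and the current deficits $\bm{Q}(t)$, picks a feasible activation $\bm{s}(t)\in\mathcal{M}_{\bm{\sigma}(t)}$ and a packet-forwarding rule that maximizes an instantaneous reward of the form $\sum_{(i,j)\in E} w_{ij}(\bm{Q}(t))\,s_{ij}(t)$, where the weights are read off from the differential deficits across each link. Using as a benchmark the stationary-randomized policy $\{\bm{\beta}_\sigma^{*}\}_{\sigma\in\Xi}$ that attains the outer maximum of the RHS of \eqref{capacity_expr}, I would show that for any $\lambda<\mathrm{RHS}-\epsilon$ the conditional one-step drift
\[
\Delta(\bm{Q}(t)) \;=\; \mathbb{E}\!\left[\tfrac{1}{2}\|\bm{Q}(t{+}1)\|^{2}-\tfrac{1}{2}\|\bm{Q}(t)\|^{2}\,\Big|\,\bm{Q}(t)\right]
\]
is bounded above by $B-\epsilon\|\bm{Q}(t)\|_{1}$ for some finite $B$. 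Standard Foster--Lyapunov arguments then yield mean-rate stability of $\bm{Q}(t)$, which, combined with the ergodicity of $\bm{\sigma}(t)$ and the arrival process, translates directly into $\liminf_{T}R_j^{\pi^{*}}(T)/T\ge \lambda$ almost surely for every $j\in V$, completing the lower bound.

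The main obstacle is the first step: identifying virtual state variables that behave like ordinary queues despite the packet-duplication inherent in broadcast; the DAG hypothesis and the in-order delivery rule are precisely what make such a definition consistent. Two secondary obstacles are (i) showing that the Max-Weight choice can be made \emph{without} any knowledge of the stationary distribution $p(\bm{\sigma})$, since the comparison to the randomized benchmark is invoked only inside the drift inequality and never requires explicit knowledge of $p(\bm{\sigma})$; and (ii) justifying that the relaxation from general proper-cuts in \eqref{bc_ob} to single-node cuts in \eqref{cap_bd} is tight for DAGs---this tightness is not argued combinatorially but falls out automatically from the achievability construction, since the same $\pi^{*}$ simultaneously saturates every single-node cut bound.
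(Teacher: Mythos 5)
Your overall architecture (upper bound already in \eqref{bc_ob}--\eqref{cap_bd}, achievability via a Max-Weight policy analyzed by Lyapunov drift against the stationary randomized benchmark $\{\bm{\beta}_\sigma^*\}$) matches the paper, but your first and crucial step has a genuine gap: the virtual queues you propose, $Q_j(t)=A(t)-R_j(t)$, do not satisfy a usable queueing recursion in terms of the controllable link rates. Under any admissible policy, node $j$ can only receive packets that its in-neighbours already hold, so the actual service of $Q_j$ in a slot is limited by the \emph{relative} deficit $\min_{i\in\partial^{\mathrm{in}}(j)}\big(R_i(t)-R_j(t)\big)$, not by the allocated capacity $\sum_k \mu_{kj}(t)$ truncated at $Q_j$. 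Consequently a Lindley-type bound $Q_j(t+1)\leq (Q_j(t)-\sum_k\mu_{kj}(t))^{+}+a(t)$ is simply false, and the drift estimate $\Delta\leq B-\epsilon\|\bm{Q}(t)\|_1$ cannot be obtained by comparing allocated rates with the benchmark: $Q_j$ can be arbitrarily large while the relative deficit is zero (the bottleneck is upstream), in which case activating all in-links of $j$ yields zero service and the term $Q_j\lambda$ in the drift is uncompensated. Nor can you rescue the argument by a back-pressure-style telescoping of $\sum_j Q_j r_j$ into link differentials, because packet duplication destroys flow conservation ($Q_i$ does not decrease when $i$ forwards a packet) --- this is exactly the obstruction the paper flags, and "in-order delivery on a DAG" does not remove it; it only makes the state compressible to $\bm{R}(t)$.

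The paper's achievability proof resolves this by choosing different virtual queues: $X_j(t)=\min_{i\in\partial^{\mathrm{in}}(j)}\big(R_i(t)-R_j(t)\big)$, together with the policy class $\Pi^*$ (in-order delivery \emph{and} reception only when all in-neighbours hold the packet). Lemma~\ref{lindley_lemma} shows $X_j$ obeys a Lindley recursion whose ``arrival'' term is the service delivered to the bottleneck in-neighbour $i_t^*(j)$; the Max-Weight weights $W_{ij}(t)=X_j(t)-\sum_{k\in K_j(t)}X_k(t)$ are built to cancel exactly these coupled terms in the drift. Because the arrival into each virtual queue is itself a controlled service rate of an upstream node, the benchmark $\pi^{\mathrm{RAND}}$ must be constructed with care: using a topological ordering of the DAG, links into the $l$-th node are throttled so that its expected incoming rate is $\lambda+\epsilon l/|V|$, which makes every difference ``upstream rate minus downstream rate'' at most $-\epsilon/|V|$ and yields strictly negative drift. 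Finally, stability of the $X_j$'s is converted into throughput via a telescoping-path argument (Lemma~\ref{stability_lemma}), a step your $Q_j$'s would have made trivial but which is needed for the relative deficits. Your remarks (i) and (ii) are fine in spirit --- $p(\bm{\sigma})$ enters only through the benchmark inside the drift inequality, and the tightness of single-node cuts indeed follows from achievability --- but both rest on the drift argument that your choice of state variables does not support.
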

\end{framed}
The above theorem shows that for computing the broadcast-capacity of a wireless DAG, taking minimum over the single-node cut-sets $\{u_j, j\in V \setminus\{\texttt{r}\}\}$ suffice (c.f. Eqn. \eqref{bc_ob}).
\subsection{An Illustrative Example of Capacity Computation} \label{example_comp}
In this section, we work out a simple example to illustrate the previous results.
\begin{figure} [h] 
\centering
\begin{minipage}{\textwidth}
\begin{overpic}[width=0.22\textwidth]{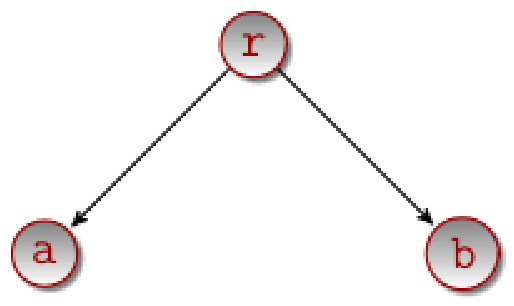}
\put(18,-5){Wireless network}
\end{overpic}
\begin{overpic}[width=0.22\textwidth]{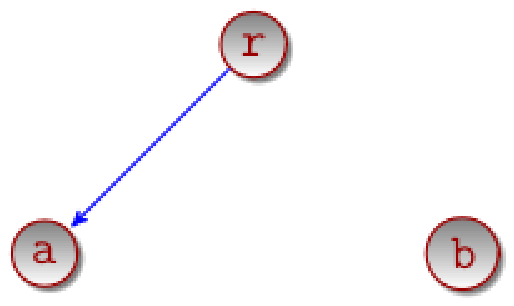}
\put(18,-5){Configuration $\sigma_1$}
\end{overpic}
\end{minipage}
\end{figure}
\begin{figure} [h] 
\begin{minipage}{\textwidth}
\begin{overpic}[width=0.22\textwidth]{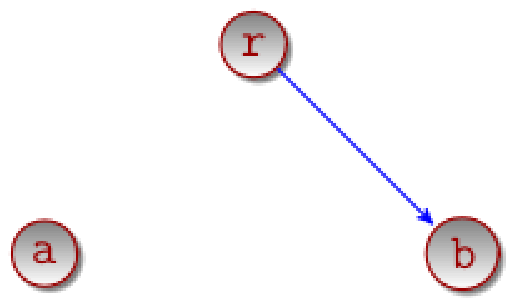}
\put(18,-5){Configuration $\sigma_2$}
\end{overpic}
\begin{overpic}[width=0.22\textwidth]{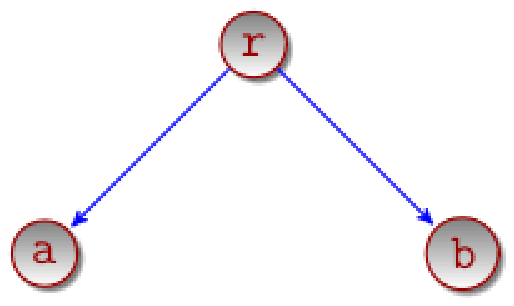}
\put(18,-5){Configuration $\sigma_3$}
\end{overpic}
\end{minipage}
\end{figure}

\begin{figure}[!h]
\center
\begin{overpic}[width=0.22\textwidth]{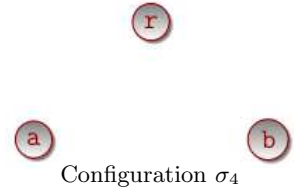}
\end{overpic}
\put(-90,-5){Configuration $\sigma_4$}
\caption{A Wireless Network and its four possible configurations}
\label{net}
\end{figure}
Consider the simple wireless network shown in Figure \eqref{net}, with node $\texttt{r}$ being the source. The possible network configurations $\sigma_i, i=1,2,3,4$ are also shown. One packet can be transmitted over a link if it is ON. Moreover, since the links are assumed to be point-to-point, even if both the links $\texttt{ra}$ and $\texttt{rb}$ are ON at a slot $t$ (i.e., $\sigma(t)=\sigma_3$), a packet can be transmitted over one of the links only. Hence, the sets of feasible activations are given as follows:
\begin{eqnarray*}
\mathcal{M}_{\sigma_1}=\{\begin{pmatrix}1\\0 \end{pmatrix}\}, \mathcal{M}_{\sigma_2}=\{\begin{pmatrix} 0 \\ 1 \end{pmatrix}\},\\
 \mathcal{M}_{\sigma_3}=\{\begin{pmatrix} 1 \\0 \end{pmatrix}, \begin{pmatrix}0 \\ 1 \end{pmatrix}\},\mathcal{M}_{\sigma_4}=\phi. 	
\end{eqnarray*}
Here the first coordinate corresponds to activating the edge $\texttt{ra}$ and the second coordinate corresponds to activating the edge $\texttt{rb}$. 
\\
 To illustrate the effect of link-correlations on broadcast-capacity, we consider three different joint-distributions $p(\bm{\sigma})$, all of them having the following marginal 
 
 \begin{eqnarray*}
 p(\texttt{ra}=\mathrm{ON})=p(\texttt{ra}=\mathrm{OFF})=\frac{1}{2} \\
 p(\texttt{rb}=\mathrm{ON})=p(\texttt{rb}=\mathrm{OFF})=\frac{1}{2}	
 \end{eqnarray*}

\paragraph{Case 1: Zero correlations} In this case, the links $\texttt{ra}$ and $\texttt{rb}$ are ON w.p. $\frac{1}{2}$ independently at every slot, i.e., 
\begin{eqnarray}
	p(\sigma_i)=1/4, \hspace{10pt} i=1,2,3,4 
\end{eqnarray}

  It can be easily seen that the broadcast capacity, as given in Eqn. \eqref{capacity_expr}, is achieved when in configurations $\sigma_1$ and $\sigma_2$, the edges $\texttt{ra}$ and $\texttt{rb}$ are activated w.p. $1$ respectively and in the configuration $\sigma_3$ the edges $\texttt{ra}$ and $\texttt{rb}$ are activated with probability $\frac{1}{2}$ and $\frac{1}{2}$. In other words, an optimal activation schedule of a corresponding stationary randomized policy is given as follows:
\begin{eqnarray*}
\bm{\beta}_{\sigma_1}^*=\big( 1 \hspace{15pt} 0\big)', \bm{\beta}_{\sigma_2}^*=\big( 0 \hspace{15pt} 1\big)', \bm{\beta}_{\sigma_3}^*=\big( \frac{1}{2} \hspace{15pt} \frac{1}{2}\big)'
\end{eqnarray*}
The optimal broadcast capacity can be computed from Eqn. \eqref{capacity_expr} to be $\lambda^*=\frac{1}{4}+0+\frac{1}{4}\times \frac{1}{2}=\frac{3}{8}$. \\
\paragraph{Case 2: Positive correlations} In this case, assume that the edges $\texttt{ra}$ and $\texttt{rb}$ are positively correlated, i.e.,  we have 
\begin{eqnarray*}
	p(\bm{\sigma}_1)=p(\bm{\sigma}_2)=0; \hspace{3pt}
	p(\bm{\sigma}_3)=p(\bm{\sigma}_4)=\frac{1}{2}
\end{eqnarray*}
Then it is clear that half of the slots are wasted when both the links are OFF (i.e., in the configuration $\bm{\sigma}_4$). When the network is in configuration $\bm{\sigma}_3$, an optimal randomized activation is to choose one of the two links uniformly at random and send packets over it. Thus 
\begin{eqnarray*}
	\bm{\beta}_{\sigma_3}^*=\big( \frac{1}{2} \hspace{15pt} \frac{1}{2}\big)'
\end{eqnarray*}
The optimal broadcast-capacity, computed from Eqn. \eqref{capacity_expr} is $\lambda^*=\frac{1}{4}$. 
\paragraph{Case 3: Negative correlations} In this case, we assume that the edges $\texttt{ra}$ and $\texttt{rb}$ are negatively correlated, i.e., we have 
\begin{eqnarray*}
	p(\bm{\sigma}_1)=p(\bm{\sigma}_2)=\frac{1}{2}; \hspace{3pt} p(\bm{\sigma}_3)=p(\bm{\sigma}_4)=0
\end{eqnarray*}
It is easy to see that in this case, a capacity-achieving activation strategy is to send packets over the link whichever is ON. The broadcast-capacity in this case is $\lambda^*=\frac{1}{2}$, the highest among the above three cases. \\
In this example, with an arbitrary joint distribution of network-configurations $\{p(\sigma_i), i=1,2,3,4\}$, it is a matter of simple calculation to obtain the optimal activations $\bm{\beta}^*_{\bm{\sigma}_i}$ in Eqn. \eqref{capacity_expr}. However it is clear that for an arbitrary network with arbitrary activations $\mathcal{M}$ and configuration sets $\Xi$, evaluating \eqref{capacity_expr} is non-trivial. In the following section we study this problem under some simplifying assumptions.  
 
\subsection{Efficient Computation of Broadcast Capacity}
In this section we study the problem of \emph{efficient computation} of the Broadcast Capacity $\lambda^*$ of a wireless DAG, given by Eqn. \eqref{capacity_expr}. In particular, we show that when the number of possible network configurations $|\Xi|(n)$ grows polynomially with $n$ (the number of nodes in the network), there exists a strongly polynomial-time algorithm to compute $\lambda^*$ under the primary-interference constraint. Polynomially-bounded network-configurations arise, for example, when the set $\Xi(n)$ consists of all subgraphs of the graph $\mathcal{G}$ with at most $d$ number of edges,  for some fixed integer $d$. In this case $|\Xi(n)|$ can be bounded as follows 
\begin{eqnarray*}
|\Xi|(n) \leq \sum_{k=0}^{d}\binom{m}{k} = \mathcal{O}(n^{2d}),	
\end{eqnarray*}
where $m ( =\mathcal{O}(n^2)$) is the number of edges in the graph $\mathcal{G}$.

\begin{framed}
\begin{theorem}[Efficient Computation of $\lambda^*$] \label{algo_comp}
Suppose that there exists a polynomial $q(n)$ such that, for a wireless $\mathrm{DAG}$ network $\mathcal{G}$ with $n$ nodes, the number of possible network configurations $|\Xi|(n)$ is bounded polynomially in $n$, i.e., $|\Xi|(n)=\mathcal{O}(q(n))$. Then, there exists a strongly $\mathsf{poly-time}$ algorithm to compute the broadcast-capacity of the network under the primary interference constraints.
\end{theorem}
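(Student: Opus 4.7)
The plan is to rewrite the max-min in \eqref{capacity_expr} as a linear program with polynomially many variables, and then to exploit the combinatorial structure of the matching polytope to solve it in strongly polynomial time. By introducing an epigraph variable $\lambda$ and, for each configuration $\sigma\in\Xi$, a vector $\bm{\beta}_\sigma\in\mathbb{R}^{m}$, the broadcast-capacity equals the optimum of
\begin{eqnarray*}
\textrm{maximize } \lambda \quad &\textrm{s.t.}& \quad \lambda \ \leq\ \sum_{\sigma\in\Xi} p(\sigma)\, \bm{u}_j\cdot\bm{\beta}_\sigma \qquad \forall\, j\in V\setminus\{\texttt{r}\},\\
 & & \quad \bm{\beta}_\sigma \ \in\ \mathrm{conv}(\mathcal{M}_\sigma) \qquad \forall\, \sigma\in\Xi.
\end{eqnarray*}
This program has $1+m\cdot |\Xi|(n)=\mathcal{O}(\mathrm{poly}(n))$ decision variables and only $n-1$ explicit ``cut'' constraints; the remaining constraints are the implicit description of each $\mathrm{conv}(\mathcal{M}_\sigma)$.

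First I would verify that the matching-polytope constraints are tractable. Under the primary interference model, $\mathrm{conv}(\mathcal{M}_\sigma)$ is exactly the matching polytope of the subgraph $\mathcal{G}(V,E_\sigma)$, which by Edmonds' theorem is cut out by the degree inequalities together with the odd-set blossom inequalities. Although blossom inequalities are exponentially many, the Padberg--Rao algorithm provides a polynomial-time separation oracle via minimum odd cut computation. Plugging this oracle, one for each of the $\mathrm{poly}(n)$ configurations, into the Gr\"otschel--Lov\'asz--Schrijver framework already yields a polynomial-time algorithm for the LP.

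To upgrade to \emph{strongly} polynomial, I would instead dualize the outer ``min over $j$'' by writing $\min_j X_j = \min_{\bm{\alpha}\in\Delta^{n-2}} \sum_j \alpha_j X_j$ over the probability simplex $\Delta^{n-2}$ on $V\setminus\{\texttt{r}\}$, and then apply Sion's minimax theorem (both feasible sets are convex and compact, and the objective is bilinear) to swap the $\max_{\bm{\beta}}$ and $\min_{\bm{\alpha}}$. Because $\bm{\beta}_\sigma$ decouples across configurations, the inner maximum becomes
\begin{eqnarray*}
\lambda^{*}_{\mathrm{DAG}} \ =\ \min_{\bm{\alpha}\in\Delta^{n-2}}\ \sum_{\sigma\in\Xi} p(\sigma)\ \max_{\bm{\beta}_\sigma\in\mathrm{conv}(\mathcal{M}_\sigma)} \Big(\textstyle\sum_{j}\alpha_j \bm{u}_j\Big)\cdot\bm{\beta}_\sigma,
\end{eqnarray*}
in which, for each fixed $\bm{\alpha}$, the inner maximization over $\bm{\beta}_\sigma$ is a weighted maximum matching problem on $\mathcal{G}(V,E_\sigma)$ with weights $\sum_j \alpha_j \bm{u}_j$, solvable in strongly polynomial time $\mathcal{O}(n^3)$ by Edmonds' blossom algorithm. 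The outer minimization is a piecewise-linear convex program over an $(n-1)$-simplex whose pieces are indexed by tuples of matchings $(m_\sigma)_{\sigma\in\Xi}$, and its combinatorial constraint matrix then invites Tardos's strongly polynomial LP algorithm.

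The hard part will be the last step, i.e.\ converting the polynomial-time scheme into a genuinely strongly polynomial one, because the coefficients $p(\sigma)\, c_e$ appearing in the cut constraints are arbitrary reals rather than small integers, so Tardos's theorem does not apply out of the box. I would handle this by separating the ``combinatorial'' portion (the matching polytope constraints, whose blossom inequalities have integer coefficients bounded by $n$) from the ``numerical'' portion (the $n-1$ cut constraints), and then invoking the Frank--Tardos preprocessing that rounds the real coefficients of the $\mathcal{O}(n)$ numerical constraints to polynomially-bounded integers without changing the optimizer. The remainder of the argument, namely bounding the number of pieces of the piecewise-linear objective and the running time of Edmonds' blossom algorithm per oracle call, is routine bookkeeping.
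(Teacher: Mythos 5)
Your first two paragraphs are, in essence, the paper's own proof: recast Eqn.~\eqref{capacity_expr} as the LP \eqref{LP_match}--\eqref{matching_constr} with one epigraph variable and $m\,|\Xi|(n)$ matching variables, observe that the $n-1$ cut constraints \eqref{b_c_constraint} are trivially separable, describe each $\mathrm{conv}(\mathcal{M}_\sigma)$ by Edmonds' degree and blossom inequalities (Theorem~\ref{match_poly}), and separate those with the strongly polynomial matching-polytope separation oracle inside the ellipsoid/GLS framework. The paper stops exactly there and reads the claimed strong polynomiality off the existence of that oracle; at that level of rigor your argument is complete and identical in approach (Padberg--Rao being the algorithm behind the separation result the paper cites).

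Where you diverge is the attempted ``upgrade'' via the Sion minimax swap, per-configuration blossom computations, and Tardos/Frank--Tardos, and the last step there has a genuine gap. Frank--Tardos preprocessing replaces a cost vector $w$ by a vector $\bar w$ of polynomially bounded encoding such that $\mathrm{sign}(w\cdot b)=\mathrm{sign}(\bar w\cdot b)$ for all integer $b$ of small $\ell_1$-norm; it preserves the optimal face of a problem whose feasible points are bounded integer vectors while leaving the feasible region untouched. Here the troublesome reals $p(\sigma)c_e$ sit in the \emph{constraints} of the outer LP (equivalently, in the coefficients of the piecewise-linear objective over the simplex), the quantity you must output is the optimal \emph{value} $\lambda^*$, which varies continuously with those coefficients, and the optimal $\bm{\alpha}$ is itself a real vector determined by the same data, so the comparisons that matter are not sign evaluations of the rounded vector against small integer vectors. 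Rounding therefore changes $\lambda^*$, and the assertion ``without changing the optimizer'' is not justified. If your goal is what the paper actually establishes (ellipsoid plus a strongly polynomial separation oracle), your first two paragraphs already suffice; if your goal is a bona fide strongly polynomial algorithm in the bit model, the route you sketch does not close the gap --- and, to be fair, this is a point the paper itself glosses over, since the number of ellipsoid iterations depends on the encoding length of $p(\sigma)$ and $c_e$.
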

\end{framed}
Although only polynomially many network configurations are allowed, we emphasize that Theorem \eqref{algo_comp} is highly non-trivial. This is because, each network-configuration $\sigma \in \Xi$ itself contains exponentially many possible activations (matchings). The key combinatorial result that leads to Theorem \eqref{algo_comp} is the existence of an efficient separator oracle for the matching-polytope for any arbitrary graph \cite{schrijver2003combinatorial}. We first reduce the problem of broadcast-capacity computation of a DAG to an LP with exponentially many constraints. Then invoking the above separator oracle, we show that this LP can be solved in strongly polynomial-time. 
\begin{proof}
See Appendix \ref{Algo_proof}.
\end{proof}
\subsection{Simple Bounds on $\lambda^*$}
Using Theorem \eqref{algo_comp} we can, in principle, compute the broadcast-capacity $\lambda^*$ of a wireless DAG with polynomially many network configurations. However, the complexity of the exact computation of $\lambda^*$ grows substantially with the number of the possible configurations $|\Xi|(n)$. Moreover, Theorem \eqref{algo_comp} does not apply when $|\Xi|(n)$ can no longer be bounded by a polynomial in $n$. A simple example of exponentially large $|\Xi|(n)$ is when a link $e$ is ON w.p. $p_e$ independently at every slot, for all $e \in E$.\\
 To address this issue, we obtain bounds on $\lambda^*$, whose computational complexity is independent of the size of $|\Xi|$. These bounds are conveniently expressed in terms of the broadcast-capacity of the static network $\mathcal{G}(V,E)$ without time-variation, i.e. when $|\Xi|=1$ and $E_\sigma =E, \sigma \in \Xi $. Let us denote the broadcast-capacity of the static network by $\lambda^*_{\text{stat}}$. Specializing Eqn. \eqref{capacity_expr} to this case, we obtain 
\begin{eqnarray} \label{whole_cap}
	\lambda^*_{\text{stat}} = \max_{\bm{\beta} \in \text{conv}(\mathcal{M})} \min_{j \in V \setminus \{\texttt{r}\}} \bm{u}_j\cdot \bm{\beta}.
\end{eqnarray}
Using Theorem \eqref{algo_comp}, $\lambda^*_\text{stat}$ can be computed in poly-time under the primary-interference constraint.\\ 
Now consider an arbitrary joint distribution $p(\bm{\sigma})$ such that each link is ON uniformly with probability $p$, i.e., 
\begin{eqnarray}\label{p_act}
	\sum_{\bm{\sigma} \in \Xi : \bm{\sigma}(e)=1}p(\bm{\sigma})=p, \hspace{10pt} \forall e \in E.
\end{eqnarray} 
 We have the following bounds: 
\begin{framed}
\begin{lemma}[Bounds on Broadcast Capacity]\label{bd_lemma}
	\begin{eqnarray*}
		p  \lambda^*_{\emph{stat}}\leq \lambda^* \leq \lambda^*_{\emph{stat}}.
	\end{eqnarray*}
\end{lemma}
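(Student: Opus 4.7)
My plan is to use the LP characterization of Theorem~\ref{cap_th} and directly exhibit feasible vectors $\{\bm{\beta}_\sigma\}_{\sigma\in\Xi}$ (for the lower bound) or collapse them into a single vector in $\mathrm{conv}(\mathcal{M})$ (for the upper bound).

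\textbf{Upper bound} $\lambda^*\leq\lambda^*_{\text{stat}}$. Given any collection $\{\bm{\beta}_\sigma\in\mathrm{conv}(\mathcal{M}_\sigma)\}$, form the average $\bar{\bm{\beta}}=\sum_{\sigma}p(\sigma)\bm{\beta}_\sigma$. Since every activation vector in $\mathcal{M}_\sigma$ is also feasible under the same interference constraint in the static graph $\mathcal{G}$ (it simply uses a subset of the edges), we have $\mathcal{M}_\sigma\subseteq\mathcal{M}$ and hence $\bar{\bm{\beta}}\in\mathrm{conv}(\mathcal{M})$. Therefore
\begin{eqnarray*}
\min_{j\in V\setminus\{\texttt{r}\}} \bm{u}_j\cdot\bar{\bm{\beta}} \leq \max_{\bm{\beta}\in\mathrm{conv}(\mathcal{M})}\min_{j\in V\setminus\{\texttt{r}\}}\bm{u}_j\cdot\bm{\beta} = \lambda^*_{\text{stat}}.
\end{eqnarray*}
Taking the supremum of the left-hand side over $\{\bm{\beta}_\sigma\}$ and invoking Theorem~\ref{cap_th} yields $\lambda^*\leq\lambda^*_{\text{stat}}$.

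\textbf{Lower bound} $p\lambda^*_{\text{stat}}\leq\lambda^*$. Let $\bm{\beta}^*\in\mathrm{conv}(\mathcal{M})$ be an optimizer of the static problem~\eqref{whole_cap}, so that $\min_j \bm{u}_j\cdot\bm{\beta}^*=\lambda^*_{\text{stat}}$. For every configuration $\sigma\in\Xi$, define the candidate activation
\begin{eqnarray*}
\bm{\beta}_\sigma \,\equiv\, \bm{\sigma}\odot\bm{\beta}^*.
\end{eqnarray*}
Writing $\bm{\beta}^*=\sum_k\alpha_k\bm{s}_k$ as a convex combination of $\bm{s}_k\in\mathcal{M}$, we get $\bm{\beta}_\sigma=\sum_k \alpha_k(\bm{\sigma}\odot\bm{s}_k)$. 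Under the primary interference model (and, more generally, any downward-closed interference model), the ``sub-matching'' $\bm{\sigma}\odot\bm{s}_k$ activates only ON-links and still satisfies the interference constraint, so $\bm{\sigma}\odot\bm{s}_k\in\mathcal{M}_\sigma$ and hence $\bm{\beta}_\sigma\in\mathrm{conv}(\mathcal{M}_\sigma)$. Now compute the mixture edge-by-edge using the uniform marginal~\eqref{p_act}: the $e$-th component is
\begin{eqnarray*}
\sum_{\bm{\sigma}\in\Xi} p(\bm{\sigma})\,\bm{\sigma}(e)\,\beta^*_e \,=\, \beta^*_e\sum_{\bm{\sigma}:\bm{\sigma}(e)=1}p(\bm{\sigma}) \,=\, p\,\beta^*_e,
\end{eqnarray*}
so $\sum_\sigma p(\sigma)\bm{\beta}_\sigma = p\bm{\beta}^*$. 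Plugging this feasible choice into the RHS of Theorem~\ref{cap_th} gives
\begin{eqnarray*}
\lambda^*\geq \min_{j\in V\setminus\{\texttt{r}\}}\bm{u}_j\cdot(p\bm{\beta}^*) \,=\, p\,\lambda^*_{\text{stat}},
\end{eqnarray*}
completing the proof.

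The only subtle point, and the one I would flag explicitly, is the claim that $\bm{\sigma}\odot\bm{s}_k\in\mathcal{M}_\sigma$; this relies on the interference constraint being preserved under deactivating (zeroing out) links, which holds for the primary/node-exclusive model used throughout. Everything else is bookkeeping with the LP~\eqref{capacity_expr} and the uniform-marginal hypothesis~\eqref{p_act}.
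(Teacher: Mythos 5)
Your proof is correct and follows essentially the same route as the paper's: the upper bound via $\mathcal{M}_\sigma\subseteq\mathcal{M}$ and averaging into $\mathrm{conv}(\mathcal{M})$, and the lower bound via the restricted activation $\bm{\beta}_\sigma=\bm{\sigma}\odot\bm{\beta}^*$ (the paper writes this as $\bm{\beta}^*(e)\mathbbm{1}(e\in\sigma)$) combined with the uniform marginal \eqref{p_act} to obtain $p\bm{\beta}^*$. Your explicit justification that $\bm{\sigma}\odot\bm{s}_k\in\mathcal{M}_\sigma$ under a downward-closed interference model is a point the paper glosses over, but otherwise the arguments coincide.
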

\end{framed}
\begin{proof}
	See Appendix \ref{bd_proof}.
\end{proof}
Generalization of the above Lemma to the setting, where the links are ON with non-uniform probabilities, may also be obtained in a similar fashion. \\
Note that, in our example \ref{example_comp} the bounds in Lemma \ref{bd_lemma} are tight. In particular, here the value of the parameter $p=\frac{1}{2}$, the lower-bound is attained in case (2) and the upper-bound is attained in case (3). \\
 The above lemma immediately leads to the following corollary:
\begin{framed}
\begin{corollary}\label{approx-comp}
\emph{(}\textsc{Approximation-algorithm for computing $\lambda^*$}\emph{)}. Assume that, under the stationary distribution $p(\bm{\sigma})$, probability that any link is ON is $p$, uniformly for all links. Then, there exists a poly-time $p$-approximation algorithm to compute the broadcast-capacity $\lambda^*$ of a $\mathrm{DAG}$, under the primary-interference constraints. 
\end{corollary}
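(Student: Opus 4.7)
The plan is to exhibit an explicit algorithm, verify it runs in polynomial time, and then check its approximation guarantee against $\lambda^*$ using the two-sided bound in Lemma \ref{bd_lemma}. All the heavy lifting is already done by Theorem \ref{algo_comp} and Lemma \ref{bd_lemma}; the corollary is essentially a bookkeeping argument that combines them.

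First I would construct the approximation algorithm as follows. Form the static DAG $\mathcal{G}(V,E)$ (i.e., ignore the time variation) and compute its broadcast capacity $\lambda^*_{\text{stat}}$ via Eqn.~\eqref{whole_cap}. Since the static case corresponds trivially to $|\Xi|=1$, it satisfies the polynomial-configuration hypothesis of Theorem \ref{algo_comp}, and therefore $\lambda^*_{\text{stat}}$ can be computed in strongly polynomial time under the primary-interference constraint. The algorithm then outputs the estimate $\hat{\lambda} := p\,\lambda^*_{\text{stat}}$, which requires only one additional multiplication and hence preserves the poly-time guarantee.

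Next I would verify the approximation factor by invoking Lemma \ref{bd_lemma}. The lower bound $p\,\lambda^*_{\text{stat}} \leq \lambda^*$ gives $\hat{\lambda}\leq \lambda^*$, so $\hat{\lambda}$ is a feasible under-estimate. The upper bound $\lambda^* \leq \lambda^*_{\text{stat}}$ gives $p\,\lambda^* \leq p\,\lambda^*_{\text{stat}} = \hat{\lambda}$. Together these yield
\begin{equation*}
p\,\lambda^* \;\leq\; \hat{\lambda} \;\leq\; \lambda^*,
\end{equation*}
which is precisely the statement that $\hat{\lambda}$ is a poly-time $p$-approximation to the broadcast capacity $\lambda^*$.

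There is no real obstacle; the only thing to check carefully is that the hypothesis $|\Xi|(n)=\mathcal{O}(q(n))$ required by Theorem \ref{algo_comp} is satisfied for the static surrogate problem (trivially so, since $|\Xi|=1$), and that the per-link marginal activation probability $p$ in Eqn.~\eqref{p_act} is the quantity appearing in the approximation ratio. Both are immediate from the hypotheses of the corollary, so the proof reduces to the one-line chain of inequalities above.
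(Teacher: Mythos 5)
Your proposal is correct and follows essentially the same route as the paper's proof: compute $\lambda^*_{\mathrm{stat}}$ in poly-time by applying Theorem \ref{algo_comp} to the static network, output $p\,\lambda^*_{\mathrm{stat}}$, and sandwich it via the two bounds of Lemma \ref{bd_lemma} to obtain the $p$-approximation guarantee. The paper merely adds a small remark (via Carath\'eodory's theorem and the achievability construction behind the lower bound of Lemma \ref{bd_lemma}) that the optimal static activation vector has polynomial support, which your direct invocation of Theorem \ref{algo_comp} with $|\Xi|=1$ already covers.
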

\end{framed}
\begin{proof}
	See Appendix \ref{approx-comp-proof}. 
\end{proof}
In the following section, we are concerned with  designing a dynamic and throughput-optimal broadcast policy for a time-varying wireless DAG network. 

\section{Throughput-Optimal Broadcast Policy for Wireless DAGs} \label{optimal_policy_section}
The classical approach of solving the throughput-optimal broadcast problem in the case of a static, wired network is to compute a set of edge-disjoint spanning trees of maximum cardinality (by invoking Edmonds' tree-packing theorem \cite{edmonds}) and then routing the incoming packets to all nodes via these pre-computed trees \cite{swati}.
In the time-varying wireless setting that we consider here, because of frequent and random changes in topology, routing packets over a fixed set of spanning trees is no-longer optimal. In particular, part of the network might become disconnected from time-to-time, and it is not clear how to select an optimal set of trees to disseminate packets. The problem becomes even more complicated when the underlying statistical model of the network-connectivity process (in particular, the stationary distribution $\{p(\bm{\sigma}), \bm{\sigma} \in \Xi\}$) is unknown, which is often the case in mobile adhoc networks. Furthermore, wireless interference constraints add another layer of complexity, rendering the optimal dynamic broadcasting problem in wireless networks extremely challenging. \\
In this section we propose an online, dynamic, throughput-optimal broadcast policy for time-varying wireless DAG networks, that does not need to compute or maintain any global topological structures, such as spanning trees. Interestingly, we show that the broadcast-algorithm that was proposed in \cite{sinha_DAG} for static wireless network, generalizes well to the time-varying case. As in \cite{sinha_DAG}, our algorithm also enjoys the attractive feature of \emph{in-order} packet delivery. The key difference between the algorithm in \cite{sinha_DAG} and our dynamic algorithm is in link-scheduling. In particular, in our algorithm, the activation sets are chosen based on current network-configuration $\bm{\sigma}(t)$. \\
\subsection{Throughput-Optimal Broadcast Policy $\pi^*$}
All policies $\pi \in \Pi$, that we consider in this paper, comprise of the following two sub-modules which are executed at every time-slot $t$: 
\begin{itemize} 
\item  $\pi(\mathcal{A})$ (\textbf{Activation-module}): activates a subset of links, subject to the interference constraint and the current network-configuration $\bm{\sigma}(t)$.
\item $\pi(\mathcal{S})$ (\textbf{Packet-Scheduling module}): schedules a subset of packets over the activated links.
\end{itemize}
Following the treatment in \cite{sinha_DAG}, we first restrict our attention to a sub-space $\Pi^{\mathrm{in-order}}$, in which the broadcast-algorithm is required to follow the so-called \emph{in-order} delivery property, defined as follows 
\begin{definition}[Policy-space $\Pi^{\mathrm{in-order}}$ \cite{sinha_DAG}]
A policy $\pi$ belongs to the space $\Pi^{\mathrm{in-order}}$ if all incoming packets are serially indexed as $\{1,2,3,\ldots \}$ according to their order of arrival at the source $\texttt{r}$ and  a node can receive a packet $p$ at time $t$, if and only if it has received the packets $\{1,2,,\ldots, p-1\}$ by the time $t$. 
\end{definition} 
As a consequence of the \emph{in-order} delivery, the state of received packets in the network  at time-slot $t$ may be succinctly represented by the $n$-dimensional vector $\bm{R}(t)$, where $R_i(t)$ denotes the index of the \emph{latest} packet received by node $i$ by time $t$. We emphasize that this succinct network-state representation by the vector $\bm{R}(t)$ is valid only under the action of policies in the space $\Pi^{\mathrm{in-order}}$. This compact representation of the packet-state results in substantial simplification of the overall state-space description. This is because, to completely specify the current packet-configurations in the network in the general policy-space $\Pi$, we need to specify the identity of each individual packets that are received by different nodes. \\
To exploit the special structure that a directed acyclic graph offers, it would be useful to constrain the packet-scheduler $\pi(\mathcal{S})$ further to the following policy-space $\Pi^* \subset \Pi^{\mathrm{in-order}}$.
\begin{definition}[Policy-space $\Pi^* \subset \Pi^{\mathrm{in-order}}$ \cite{sinha_DAG}]
A broadcast policy $\pi$ belongs to the space $\Pi^*$ if $\pi \in \Pi^{\mathrm{in-order}}$ and $\pi$  satisfies the additional constraint that a packet $p$ can be received by a node $j$ at time $t$ if all in-neighbours of the node $j$ have received  the packet $p$ by the time $t$. 
\end{definition}
The above definition is further illustrated in Figure \ref{pi*_figure}. The variables $X_j(t)$ and $i_t^*(j)$ appearing in the Figure are defined subsequently in Eqn. \eqref{var_def}.\\
\begin{figure} [h!] 
\centering
\begin{overpic}[width=0.39\textwidth]{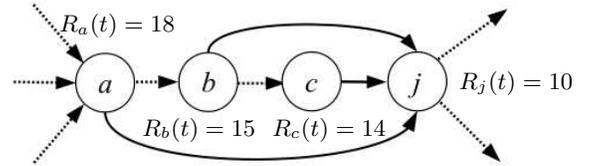}
  \put(11,28){\footnotesize $R_a(t)=18$}
  \put(27,8){\footnotesize $R_b(t)=15$}
  \put(52,8){\footnotesize $R_c(t)=14$}
  \put(88,17){\footnotesize $R_j(t)=10$}
  \end{overpic}
  \caption{\small Under a policy $\pi\in \Pi^*$, the set of packets available for transmission to node $j$ at slot $t$ is $\{11,12,13,14\}$,
  which are available at all in-neighbors of node $j$.
  The in-neighbor of $j$ inducing the smallest packet deficit is $i^*_t(j)=c$, and $X_{j}(t) = 4$. }
  \label{pi*_figure}
\end{figure}
It is easy to see that for all policies $\pi \in \Pi^*$, the packet scheduler $\pi(\mathcal{S})$ is \emph{completely} specified. Hence, to specify a policy in the space $\Pi^*$, we need to define the activation-module $\pi(\mathcal{A})$ only. \\
Towards this end, let $\mu_{ij}(t)$ denote the rate (in packets per slot) allocated to the edge $(i,j)$ in the slot $t$ by a policy $\pi \in \Pi^*$, for all $(i,j)\in E$. Note that, the allocated rate $\bm{\mu}(t)$ is constrained by the current network configuration $\bm{\sigma}(t)$ at slot $t$. In other words, we have 
\begin{eqnarray}
\bm{\mu}(t) \in \bm{c}\odot \mathcal{M}_{\bm{\sigma}(t)}, \hspace{5pt} \forall t
\end{eqnarray}
This implies that, under any randomized activation
\begin{eqnarray}
\mathbb{E}{\bm{\mu}(t)} \in \bm{c}\odot \mathrm{conv}( \mathcal{M}_{\bm{\sigma}(t)}), \hspace{5pt} \forall t
\end{eqnarray}
In the following lemma, we show that for all policies $\pi \in \Pi^*$,  certain state-variables $\bm{X}(t)$, derived from the state-vector $\bm{R}(t)$, satisfy so-called \emph{Lindley recursion} \cite{lindley1952theory} of queuing theory. Hence these variables may be thought of as \emph{virtual queues}. This technical result will play a central role in deriving a \emph{Max-Weight} type throughput-optimal policy $\pi^*$, which is obtained by stochastically stabilizing these virtual-queues. \\
For each $j \in V \setminus \{\texttt{r}\}$, define 
\begin{eqnarray} 
X_j(t) = \min_{i \in \partial^{\mathrm{in}}(j)} \big(R_i(t)-R_j(t)\big) \label{X-def}\label{var_def}\\
i_t^*(j) =\arg \min_{i \in \partial^{\mathrm{in}}(j)} \big(R_i(t)-R_j(t)\big), \label{i_t} 
\end{eqnarray}

where in Eqn. \eqref{i_t}, ties are broken lexicographically. The variable $X_j(t)$ denotes the minimum packet deficit of node $j$ with respect to any of its in-neighbours. Hence, from the definition of the policy-space $\Pi^*$, it is clear that $X_j(t)$ is the maximum number of packets that a node $j$ can receive from its in-neighbours at time $t$, under any policy in $\Pi^*$.\\
The following lemma proves a ``queuing-dynamics" of the variables $X_j(t)$, under any policy $\pi \in \Pi^*$. 
\begin{framed}
\begin{lemma}[\cite{sinha_DAG}] \label{lindley_lemma}
Under all policies in $\pi \in \Pi^*$, we have 
\begin{eqnarray}\label{dyn}
X_j(t+1) \leq \bigg(X_j(t)- \sum_{k \in \partial^{\mathrm{in}}(j)}\mu_{kj}(t)\bigg)^+ \nonumber \\
+ \sum_{m \in \partial^{\mathrm{in}}(i_t^*(j))} \mu_{mi_t^*(j)}(t) 
\end{eqnarray}
\end{lemma}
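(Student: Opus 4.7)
The plan is to bound $X_j(t+1)$ from above by plugging the specific in-neighbour $i_t^*(j)$ into the minimum that defines $X_j(t+1)$, and then to track the one-slot evolution of $R_{i_t^*(j)}$ and $R_j$ under an arbitrary policy $\pi\in\Pi^*$. The key identity to keep in mind throughout is $R_{i_t^*(j)}(t) - R_j(t) = X_j(t)$, which is just the definition of $i_t^*(j)$ in~\eqref{i_t}.

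Since $X_j(t+1)$ is a minimum over the in-neighbours of $j$, I would first write
\begin{eqnarray*}
X_j(t+1) \;\leq\; R_{i_t^*(j)}(t+1) - R_j(t+1).
\end{eqnarray*}
For any node $i$, the number of distinct new packets that arrive in one slot is bounded above by the total rate allocated on its incoming links, so in particular
\begin{eqnarray*}
R_{i_t^*(j)}(t+1) \;\leq\; R_{i_t^*(j)}(t) + \sum_{m \in \partial^{\mathrm{in}}(i_t^*(j))} \mu_{m\, i_t^*(j)}(t).
\end{eqnarray*}
For $R_j(t+1)$ I need a \emph{lower} bound. Under the $\Pi^*$ rule, every packet in the window $\{R_j(t)+1,\ldots,R_j(t)+X_j(t)\}$ is already present at every in-neighbour of $j$ and is therefore simultaneously eligible for transmission on any activated in-link into $j$; the scheduler can assign distinct indices from this window to distinct activated incoming links, so $j$ absorbs $\min\!\big(X_j(t),\,\sum_{k\in\partial^{\mathrm{in}}(j)}\mu_{kj}(t)\big)$ new packets in slot $t$. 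Substituting this lower bound, using $R_{i_t^*(j)}(t) - R_j(t) = X_j(t)$, and applying the identity $a - \min(a,b) = (a-b)^+$ collapses the three bounds into the claimed Lindley inequality.

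The main obstacle is the lower bound on $R_j(t+1) - R_j(t)$: one must argue that the allocated capacity $\sum_k \mu_{kj}(t)$ is not wasted through cross-link duplicate deliveries while the eligible window is nonempty. This is exactly what the restrictive definition of $\Pi^*$ delivers, because every packet becomes available at \emph{all} in-neighbours of $j$ before it is ever eligible at $j$, so the packet-scheduling sub-module can partition the eligible window across the activated incoming edges of $j$ without collision. Everything else is routine bookkeeping with the definitions of $X_j$ and $i_t^*(j)$.
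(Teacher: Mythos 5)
Your proof is correct and follows essentially the same route as the argument this paper defers to in \cite{sinha_DAG}: plug the particular in-neighbour $i_t^*(j)$ into the minimum defining $X_j(t+1)$, upper-bound $R_{i_t^*(j)}(t+1)-R_{i_t^*(j)}(t)$ by the total allocated incoming rate, lower-bound $R_j(t+1)-R_j(t)$ by $\min\big(X_j(t),\sum_{k\in\partial^{\mathrm{in}}(j)}\mu_{kj}(t)\big)$, and conclude via $a-\min(a,b)=(a-b)^+$. You also correctly identify the one delicate point, namely that the completely specified packet scheduler of $\Pi^*$ can serve the eligible window $\{R_j(t)+1,\dots,R_j(t)+X_j(t)\}$ (available at every in-neighbour of $j$) across the activated in-links without duplication, which is exactly what justifies the lower bound on $R_j(t+1)$.
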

\end{framed}
Lemma \eqref{lindley_lemma} shows that the variables $\big(X_j(t),j\in V\setminus \{\texttt{r}\}\big)$ satisfy Lindley recursions in the policy-space $\Pi^*$. Interestingly, unlike the corresponding unicast problem \cite{tassiulas}, there is no ``physical queue" in the system.\\
 Continuing correspondence with the unicast problem, the next lemma shows that any activation module $\pi(\mathcal{A})$ that  ``stabilizes" the \emph{virtual queues} $\bm{X}(t)$ for all arrival rates $\lambda<\lambda^*$, constitutes a throughput optimal broadcast-policy for a wireless DAG network. 
\begin{framed}
\begin{lemma} \label{stability_lemma}
Suppose that, the underlying topology of the wireless network is a $\mathrm{DAG}$. If under the action of a broadcast policy $\pi \in \Pi^*$, for all arrival rates $\lambda< \lambda^*$, the virtual queue process $\{\bm{X}(t)\}_{0}^{\infty}$ is rate-stable, i.e., \
\begin{eqnarray*}
\limsup_{T\to \infty} \frac{1}{T}\sum_{j\neq \texttt{r}}X_j(T)=0, \hspace{5pt}\mathrm{w.p.}\hspace{2pt} 1,
\end{eqnarray*}
then $\pi$ is a throughput-optimal broadcast policy for the $\mathrm{DAG}$ network.  
\end{lemma}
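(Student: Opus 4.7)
The plan is to show that under the rate-stability hypothesis, each node's reception rate converges to the arrival rate $\lambda$, for any $\lambda < \lambda^*$. Combined with Definition~\ref{capacity_def}, this makes $\pi$ throughput-optimal. The key tool is induction along a topological ordering of the DAG, driven by the identity that rewrites $R_j(t)$ in terms of in-neighbour receptions and the virtual queue $X_j(t)$.

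First I would unpack what rate-stability of the aggregate virtual queue $\sum_{j \neq \texttt{r}} X_j(T)$ gives. Since $\pi \in \Pi^*$ requires that node $j$ may receive packet $p$ only after \emph{all} its in-neighbours have received it, the in-order convention forces $R_j(t) \leq R_i(t)$ for every $i \in \partial^{\mathrm{in}}(j)$, so each summand $X_j(t)$ is non-negative. Hence $(1/T)\sum_j X_j(T) \to 0$ w.p.\ $1$ implies $X_j(T)/T \to 0$ w.p.\ $1$ for every $j \neq \texttt{r}$. Moreover, combining the definition $X_j(t) = \min_{i \in \partial^{\mathrm{in}}(j)}(R_i(t)-R_j(t))$ with the fact that this minimum is exactly $\min_i R_i(t) - R_j(t)$, we get the deterministic identity
\begin{eqnarray*}
R_j(t) \;=\; \min_{i \in \partial^{\mathrm{in}}(j)} R_i(t) \;-\; X_j(t), \qquad \forall j \neq \texttt{r}.
\end{eqnarray*}

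Next I would carry out an induction on a topological ordering $\texttt{r} = v_0, v_1, \ldots, v_{n-1}$ of the DAG (which exists by acyclicity). For the base case, the source trivially satisfies $R_\texttt{r}(T) = \sum_{t=1}^T A(t)$, so by the strong law of large numbers applied to the i.i.d.\ arrivals $\{A(t)\}$, $R_\texttt{r}(T)/T \to \lambda$ w.p.\ $1$. For the inductive step, fix a node $j \neq \texttt{r}$ and assume $R_i(T)/T \to \lambda$ w.p.\ $1$ for every in-neighbour $i \in \partial^{\mathrm{in}}(j)$ (which appear earlier in the topological order). Dividing the identity above by $T$ and passing to the limit, the minimum over the finite set $\partial^{\mathrm{in}}(j)$ converges to $\lambda$, while $X_j(T)/T \to 0$ w.p.\ $1$ by the hypothesis, yielding $R_j(T)/T \to \lambda$ w.p.\ $1$.

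Finally, I would assemble the pieces: the induction concludes that $R_j(T)/T \to \lambda$ almost surely for every $j \in V$, so $\min_j \liminf_{T\to\infty} R_j(T)/T = \lambda$ w.p.\ $1$, i.e.\ $\pi$ is a broadcast policy of rate $\lambda$ in the sense of \eqref{bcdef}. Since this holds for every $\lambda < \lambda^*$, Definition~\ref{capacity_def} makes $\pi$ throughput-optimal for the DAG. The only delicate point is arguing that the topological induction reaches \emph{every} node (nodes not reachable from $\texttt{r}$ cannot possibly receive broadcast traffic, so one may restrict attention to the reachable sub-DAG without loss of generality). Beyond that, the argument is essentially bookkeeping; the main conceptual step is the identity that turns $X_j$ into a ``slack'' between consecutive nodes in the topological order, which is exactly what makes rate-stability of the virtual queues sufficient for throughput-optimality.
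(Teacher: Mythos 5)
Your argument is correct and rests on the same core observation as the paper's proof --- under $\Pi^*$, $X_j(t)$ is the nonnegative slack between $R_j(t)$ and its best in-neighbour --- but you organize the chaining differently. The paper, for each node $j$ and each time $T$, constructs an explicit (time-varying) path from $j$ back to $\texttt{r}$ by repeatedly following $i_T^*(\cdot)$ (its \texttt{Path Construction Algorithm}) and telescopes $R_j(T)=R_{\texttt{r}}(T)-\sum_i X_{u_i}(T)$ along it; bounding the path-sum by $\sum_{j\neq \texttt{r}}X_j(T)$ yields the uniform sandwich $\frac{1}{T}\sum_{t}A(t)-\frac{1}{T}\sum_{j\neq\texttt{r}}X_j(T)\leq R_j(T)/T\leq \frac{1}{T}\sum_{t}A(t)$, so only the aggregate hypothesis is used and no limit is passed through a minimum. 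You instead use the per-node identity $R_j(t)=\min_{i\in\partial^{\mathrm{in}}(j)}R_i(t)-X_j(t)$ and induct along a topological order, first extracting $X_j(T)/T\to 0$ for each $j$ from the aggregate via nonnegativity of the $X_j$'s (itself a consequence of the $\Pi^*$ constraint $R_j\leq R_i$ for $i\in\partial^{\mathrm{in}}(j)$); this avoids the time-varying path construction and is arguably cleaner, at the price of being slightly less quantitative. One phrasing correction: the reachability issue is not best handled by ``restricting to the reachable sub-DAG,'' because the minimum defining $X_j$ runs over \emph{all} in-neighbours, and under $\Pi^*$ a reachable node with an unreachable in-neighbour could never receive any packet; the right statement (and the paper's implicit standing assumption, made explicit in its proof) is that every node is reachable from $\texttt{r}$, since otherwise $\lambda^*=0$ and the lemma is vacuous.
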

\end{framed}
\begin{proof}
See Appendix \eqref{stability_lemma_proof}. 
\end{proof}
Equipped with Lemma \eqref{stability_lemma}, we now set out to derive a dynamic activation-module $\pi^*(\mathcal{A})$ to stabilize the virtual-queue process $\{\bm{X}(t)\}_{0}^{\infty}$ for all arrival rates $\lambda < \lambda^*$. Formally, the structure of the module $\pi^*(\mathcal{A})$ is given by a mapping of the following form: 
\begin{eqnarray*}
\pi^*(\mathcal{A}) : (\bm{X}(t), \bm{\sigma}(t)) \to \mathcal{M}_{\bm{\sigma}(t)}
\end{eqnarray*}
Thus, the module ${\pi}^*(\mathcal{A})$ is stationary and dynamic as it depends on the current value of the state-variables and the network-configuration only. This activation-module is different from the policy described in \cite{sinha_DAG} as the latter is meant for static wireless networks and hence, does not take into account the time-variation of network configurations, which is the focus of this paper. \\
To describe $\pi^*(\mathcal{A})$, we first define the node-set 
\begin{eqnarray}
K_j(t)=\{m \in \partial^{\mathrm{out}}(j) : j= i_t^*(m)\}
\end{eqnarray}
where the variables $i_t^*(m)$ are defined earlier in Eqn. \eqref{i_t}. 
The activation-module $\pi^*(\mathcal{A})$ is described in Algorithm 1. The resulting policy in the space $\Pi^*$ with the activation-module $\pi^*(\mathcal{A})$ is called $\pi^*$. \\
\begin{algorithm} 
\caption{A Throughput-optimal Activation Module $\pi^*(\mathcal{A})$}
\begin{algorithmic}[1]
\STATE To each link $(i,j)\in E$, assign a weight as follows:
\begin{eqnarray} \label{weight_comp}
W_{ij}(t)=
\begin{cases} X_j(t)-\sum_{k\in K_j(t)}X_k(t) , \hspace{2pt} \mathrm{if} \hspace{2pt} \bm{\sigma}_{(i,j)}(t)=1\\
0, \hspace{15pt} \mathrm{o.w.} 
\end{cases} 
\end{eqnarray}
\STATE Select an activation $s^*(t)\in \mathcal{M}_{\bm{\sigma}(t)}$ as follows: 
\begin{eqnarray}
s^*(t) \in \arg \max_{\bm{s} \in \mathcal{M}_{\bm{\sigma}(t)}} \bm{s} \cdot \big(\bm{c} \odot \bm{W}(t)\big)
\end{eqnarray}
\STATE Allocate rates on the links as follows: 
\begin{eqnarray}
\bm{\mu}^*(t)= \bm{c}\odot \bm{s}^*(t)
\end{eqnarray}
\end{algorithmic}
\end{algorithm}
Note that, in steps (1) and (2) above, the computation of link-weights and link-activations depend explicitly on the current network-configuration $\bm{\sigma}(t)$. As anticipated, in the following lemma, we show that the activation-module $\pi^*(\mathcal{A})$ stochastically stabilizes the virtual-queue process $\{\bm{X}(t)\}_{0}^{\infty}$. 
\begin{framed}
\begin{lemma} \label{stabilizing_X_lemma}
For all arrival rates $\lambda < \lambda^*$, under the action of the policy $\pi^*$ in a DAG, the virtual-queue process $\{\bm{X}(t)\}_{0}^{\infty}$ is rate-stable, i.e.,
\begin{eqnarray*}
\limsup_{T\to \infty} \frac{1}{T}\sum_{j\neq \texttt{r}}X_j(T)=0, \hspace{5pt}\mathrm{w.p.}\hspace{2pt} 1
\end{eqnarray*}
\end{lemma}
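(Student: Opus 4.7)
The plan is to carry out a Lyapunov-drift analysis in the spirit of the Tassiulas max-weight stability proof, adapting the static-DAG argument of \cite{sinha_DAG} to the new, configuration-dependent activation set $\mathcal{M}_{\bm{\sigma}(t)}$. I will work with the quadratic Lyapunov function $L(\bm{X}(t)) = \tfrac{1}{2}\sum_{j \in V\setminus\{\texttt{r}\}} X_j^2(t)$ and bound its one-slot conditional drift by squaring both sides of the Lindley-type recursion of Lemma~\ref{lindley_lemma} and applying the standard inequality $((a-b)^+ + c)^2 \leq a^2 + b^2 + c^2 + 2a(c-b)$. Since link capacities are bounded, this isolates an $O(1)$ constant and a cross term of the form $-\sum_{j \neq \texttt{r}} X_j(t)\bigl(S_j(t) - A_j(t)\bigr)$, where $S_j(t) = \sum_{k \in \partial^{\mathrm{in}}(j)} \mu_{kj}(t)$ is the service to the virtual queue $X_j$ and $A_j(t)$ is the in-flow to $i_t^*(j)$ (with the convention that $A_j(t) = A(t)$ whenever $i_t^*(j) = \texttt{r}$, since the source's received-packet index increases by the exogenous arrivals).

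A routine re-indexing by edges---gathering the contribution of link $(a,b)$ across the two sums as $\mu_{ab}(t)\bigl(X_b(t) - \sum_{k \in K_b(t)} X_k(t)\bigr)$---identifies this cross term with $\bm{\mu}(t)\cdot \bm{W}(t) - A(t)\, Y_{\texttt{r}}(t)$, where $\bm{W}(t)$ is precisely the link-weight vector used by $\pi^*$ in Eqn.~\eqref{weight_comp} and $Y_{\texttt{r}}(t) := \sum_{j:\, i_t^*(j) = \texttt{r}} X_j(t)$. Taking conditional expectations given $\bm{X}(t)$, invoking the defining max-weight property of $\pi^*$ against any candidate stationary randomized allocation $\bar{\bm{\mu}} = \bm{c} \odot \sum_\sigma p(\sigma)\bm{\beta}_\sigma$ with $\bm{\beta}_\sigma \in \mathrm{conv}(\mathcal{M}_\sigma)$, and averaging over the configuration process then yields
\begin{eqnarray*}
\mathbb{E}[\Delta L(t) \mid \bm{X}(t)] \leq C - \bar{\bm{\mu}} \cdot \bm{W}(t) + \lambda\, Y_{\texttt{r}}(t).
\end{eqnarray*}
For a genuinely non-i.i.d.\ stationary ergodic configuration process, this one-slot step would be replaced by a $T$-slot drift that invokes the ergodic theorem to substitute empirical frequencies with $\{p(\sigma)\}$.

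Because $\lambda < \lambda^*$, Theorem~\ref{cap_th} supplies a choice of $\{\bm{\beta}_\sigma^*\}$ making $\bar{R}_j^{\mathrm{in}} := \bm{u}_j \cdot \sum_\sigma p(\sigma)\bm{\beta}_\sigma^* \geq \lambda + \epsilon$ for every $j \neq \texttt{r}$ and some fixed $\epsilon > 0$. The crucial DAG-specific observation is that $W_e(t)$ depends on $e=(a,b)$ only through its head $b$, so for every $\texttt{r}$-rooted spanning arborescence $T$ the sum $\sum_{e \in T} W_e(t)$ collapses identically to $\sum_{j \neq \texttt{r}} W_j(t) = Y_{\texttt{r}}(t)$. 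I then plan to use the fractional Edmonds branching theorem to decompose the capacity-achieving flow $\bar{\bm{\mu}}$ into a combination of $\texttt{r}$-rooted spanning arborescences of total weight $\lambda+\epsilon$; this, combined with the head-only dependence above and an induction on the DAG's topological depth that propagates stability from the out-neighbors of $\texttt{r}$ down to deeper nodes, reduces $\bar{\bm{\mu}}\cdot\bm{W}(t) - \lambda\, Y_{\texttt{r}}(t)$ to a quantity at least $\epsilon$ times the aggregate deficit $\sum_{j \neq \texttt{r}} X_j(t)$. Once a pointwise negative-drift inequality of the form $\mathbb{E}[\Delta L(t)\mid \bm{X}(t)] \leq C' - \epsilon \sum_j X_j(t)$ is in hand, a standard Foster--Lyapunov argument (summing the inequality over $t$, telescoping $L$, and dividing by $T$) yields $\limsup_T T^{-1}\sum_{j \neq \texttt{r}} X_j(T) = 0$ almost surely. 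I expect this last step to be the main obstacle, because the weights $W_j(t)$ can take either sign and so the crude entry-wise domination that works for single-hop max-weight is not applicable; it is precisely the head-only structure of $\bm{W}(t)$ combined with the DAG's arborescence-packing that rescue the comparison and turn it into a genuine pointwise negative drift.
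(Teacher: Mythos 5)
Your overall skeleton---the quadratic Lyapunov function, squaring the Lindley recursion of Lemma~\ref{lindley_lemma}, re-indexing the cross term into the link-weight form $\bm{\mu}(t)\cdot\bm{W}(t)$, comparing the max-weight choice of $\pi^*$ against a stationary randomized allocation built from the capacity characterization, and then telescoping---is the same as the paper's. The genuine divergence, and the gap, is at the step you yourself flag as the main obstacle. The paper never attempts the inequality $\bar{\bm{\mu}}\cdot\bm{W}(t)-\lambda Y_{\texttt{r}}(t)\geq \epsilon\sum_{j\neq \texttt{r}}X_j(t)$ that your plan needs. Instead it builds the randomized comparison policy with an extra \emph{graded thinning} along a topological ordering $v_1=\texttt{r},v_2,\ldots,v_{|V|}$ of the DAG: the selected incoming links of node $v_l$ are activated only with probability $q_l$, chosen in \eqref{q_prob} so that the expected total incoming rate to $v_l$ is exactly $\lambda+\epsilon l/|V|$, strictly increasing along the order. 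Since $i_t^*(j)$ is an in-neighbour of $j$ and hence precedes $j$ in the ordering, every virtual queue sees (expected arrival rate) minus (expected service rate) at most $-\epsilon/|V|$ (Eqn.~\eqref{rate_comparison_final}); multiplying by the nonnegative $X_j(t)$ gives the pointwise drift bound $\leq B|V|-\tfrac{2\epsilon}{|V|}\sum_{j\neq\texttt{r}}X_j(t)$, with no need to control the signs of the weights at all.

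Your arborescence route cannot deliver that strength. With the unthinned flow $\bar{\bm{\mu}}$ of total arborescence weight $\lambda+\epsilon$, the telescoping identity only yields $\bar{\bm{\mu}}\cdot\bm{W}(t)-\lambda Y_{\texttt{r}}(t)\geq\epsilon\,Y_{\texttt{r}}(t)$, not $\epsilon\sum_{j}X_j(t)$: on the path $\texttt{r}\to a\to b$ with both link rates equal to $\lambda+\epsilon$, the left-hand side equals $\epsilon X_a$ and contains no $X_b$ term, so the pointwise inequality you assert is false, and Foster--Lyapunov then controls only the deficits of nodes whose best in-neighbour is the source. The ``induction on topological depth'' would therefore have to be a genuinely different multi-stage stability argument (stabilize layer by layer), which is not carried out; the paper's $q_l$ construction is precisely the device that collapses that induction into a single one-shot drift inequality. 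Two further technical wrinkles in your plan: (i) the collapse $\sum_{e\in T}W_e(t)=Y_{\texttt{r}}(t)$ holds only if every edge of the arborescence is ON at slot $t$, since OFF links carry weight zero by \eqref{weight_comp}, so the per-arborescence identity cannot be applied directly to the configuration-averaged flow (the paper avoids this by doing the comparison at the node level after averaging over $\bm{\sigma}$); and (ii) because weights may be negative, a componentwise packing $\sum_T w_T\chi^T\leq\bar{\bm{\mu}}$ does not lower-bound $\bar{\bm{\mu}}\cdot\bm{W}(t)$ by the packed part---you would have to let the comparison policy transmit only the packed flow. These are fixable, but the missing grading of rates along the topological order is the essential idea your proposal lacks.
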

\end{framed}
The proof of this lemma is centered around a Lyapunov-drift argument \cite{neely2010stochastic}. Its complete proof is provided in Appendix \eqref{stabilizing_X_lemma_proof}.\\
Combining the lemmas \eqref{stability_lemma} and \eqref{stabilizing_X_lemma}, we immediately obtain the main result of this section
\begin{framed}
\begin{theorem} \label{pi_star_optimality}
The policy $\pi^*$ is a throughput-optimal broadcast policy in a time-varying wireless $\mathrm{DAG}$ network.
\end{theorem}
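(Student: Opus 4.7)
The plan is to deduce Theorem~\ref{pi_star_optimality} directly by chaining the two preceding lemmas, since they together supply exactly the ingredients needed. First, I would verify that the proposed policy $\pi^*$ lies in the restricted policy-space $\Pi^*$. This is immediate from the construction: by definition $\pi^*$ uses the in-order indexing of packets at the source, and its packet-scheduling submodule $\pi^*(\mathcal{S})$ is fixed by the $\Pi^*$ convention (a node $j$ can receive packet $p$ only after all in-neighbours have received it); only the activation module $\pi^*(\mathcal{A})$ from Algorithm~1 is novel. Thus membership $\pi^*\in\Pi^*$ is verified without computation.

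Next, I would invoke Lemma~\ref{stabilizing_X_lemma}, which asserts that under $\pi^*$ the virtual-queue process $\{\bm{X}(t)\}_{t\ge 0}$ is rate-stable for every arrival rate $\lambda<\lambda^*$, namely
\begin{eqnarray*}
\limsup_{T\to\infty}\frac{1}{T}\sum_{j\ne \texttt{r}}X_j(T)=0 \quad \text{w.p.\ }1.
\end{eqnarray*}
This is the non-trivial input; the Lyapunov-drift argument behind it (deferred to the appendix) already takes care of the fact that the Max-Weight style activations of Algorithm~1 are competing against the best stationary randomized activation at every configuration $\bm{\sigma}(t)$.

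Then I would feed the conclusion of Lemma~\ref{stabilizing_X_lemma} into Lemma~\ref{stability_lemma}, whose hypothesis requires precisely rate-stability of $\bm{X}(t)$ for every $\lambda<\lambda^*$ under a policy in $\Pi^*$ on a DAG. Since both the underlying-graph hypothesis (DAG) and the policy-space hypothesis ($\pi^*\in\Pi^*$) are satisfied, Lemma~\ref{stability_lemma} yields that $\pi^*$ is a broadcast policy of rate $\lambda$ for every $\lambda<\lambda^*$. Taking $\lambda\nearrow\lambda^*$ and recalling Definition~\ref{capacity_def}, we conclude that $\pi^*$ is throughput-optimal, which proves the theorem.

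The main obstacle is essentially absent at the level of this theorem: once the two supporting lemmas are in hand, the argument is a short citation chain. The genuine difficulty, which lives inside Lemma~\ref{stabilizing_X_lemma}, is to show that even without knowledge of the stationary distribution $p(\bm{\sigma})$ of the configuration process, the Max-Weight activation rule (with link weights $W_{ij}(t)=X_j(t)-\sum_{k\in K_j(t)}X_k(t)$) drives the negative-drift condition needed for rate-stability up to the capacity $\lambda^*$ characterized by Theorem~\ref{cap_th}. That rests on a standard Lyapunov comparison against the optimal stationary randomized activation $\{\bm{\beta}_\sigma^*\}$ attaining the maximum in \eqref{capacity_expr}, and is invoked here only as a black box.
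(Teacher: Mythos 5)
Your proposal is correct and matches the paper's own proof: the theorem is obtained exactly by combining Lemma \ref{stabilizing_X_lemma} (rate-stability of the virtual queues under $\pi^*$ for all $\lambda<\lambda^*$) with Lemma \ref{stability_lemma} (rate-stability of a policy in $\Pi^*$ on a DAG implies throughput-optimality), with the deep work deferred to the Lyapunov-drift argument inside Lemma \ref{stabilizing_X_lemma}. Your additional check that $\pi^*\in\Pi^*$ is a harmless (and correct) explicit verification that the paper leaves implicit.
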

\end{framed}
\section{Throughput-Optimal Broadcasting with Infrequent Inter-node Communication} \label{intermittent_connectivity}
In practical mobile wireless networks, it is unrealistic to assume knowledge of network-wide packet-state information by every node at every slot. This is especially true in the case of time-varying wireless networks, where network-connectivity changes frequently. In this section we extend the main results of section \ref{optimal_policy_section} by considering the setting where the nodes make control decisions with \emph{imperfect} packet-state information that they currently possess. We will show that the dynamic broadcast-policy $\pi^*$ retains its throughput-optimality even in this challenging scenario.  

\paragraph{State-Update Model} We assume that two nodes $i$ and $j$ can mutually update their knowledge of the set of packets received by the other node, only at those slots with positive probability, when the corresponding wireless-link $(i,j)$ is in ON state. Otherwise, it continues working with the outdated packet state-information.  Throughout this section, we assume that the nodes have perfect information about the current network-configuration $\bm{\sigma}(t)$. \\
Suppose that, the latest time prior to time $t$ when packet-state update was made across the link $(i,j)$ is $t-T_{(i,j)}(t)$. Here $T_{(i,j)}(t)$ is a random variable, supported on the set of non-negative integers. Assume for simplicity, that the network configuration process $\{\bm{\sigma}(t)\}_{0}^{\infty}$ evolves according to a finite-state, positive recurrent Markov-Chain, with the stationary distribution $\{p(\bm{\sigma})>0, \bm{\sigma}\in \Xi\}$. With this assumption, $T_{(i,j)}(t)$ is related to the first-passage time in the finite-state positive recurrent chain $\{\bm{\sigma}(t)\}_{0}^{\infty}$. Using standard theory \cite{gallager2012discrete}, it can be shown that the random variable $T(t)\equiv \sum_{(i,j) \in E} T_{(i,j)}(t)$ has bounded expectation for all time $t$ . \\
\paragraph{Analysis of $\pi^*$ with Imperfect Packet-State Information}
Consider running the policy $\pi^*$, where each node $j$ now computes the weights $W'_{ij}(t)$, given by Eqn.\eqref{weight_comp}, of the in-coming links $(i,j) \in E$, based on the latest packet-state information available to it. In particular, for each of its in-neighbour $i \in \partial^{\mathrm{in}}(j)$, the node $j$ possess the following information of the number of packets received by node $i$:
\begin{eqnarray} \label{delayed_update}
R_i'(t)=R_i(t-T_{(ij)}(t))
\end{eqnarray}
Now, if the packet-scheduler module $\pi'(\mathcal{S})$ of a broadcast-policy $\pi'$ takes scheduling decision based on the imperfect state-information $\bm{R}'(t)$ (instead of the true state $\bm{R}(t)$), it still retains the following useful property:
\begin{framed}
\begin{lemma} \label{pi_class_lemma}
$\pi'\in \Pi^*$.
\end{lemma}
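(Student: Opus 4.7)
The plan is to verify that $\pi'$ satisfies the two defining properties of $\Pi^*$: (a) in-order packet delivery, and (b) the structural constraint that a packet $p$ can be received by a node $j$ at time $t$ only if every in-neighbor of $j$ has received $p$ by time $t$. Both properties will follow from a single structural observation, namely the monotonicity of the per-node reception counts: since packets are never removed from a node once received, $R_i(\tau)$ is non-decreasing in $\tau$ for every $i\in V$, and hence, by Eqn.~\eqref{delayed_update},
\[
R'_i(t) \;=\; R_i\big(t-T_{(i,j)}(t)\big) \;\leq\; R_i(t), \qquad \forall\, i\in\partial^{\mathrm{in}}(j),\ \forall\, t.
\]

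First I would describe precisely how $\pi'$ selects packets for transmission. Mirroring the specification of $\pi^*$ but with delayed state information, each destination node $j$ maintains the surrogate deficit $X'_j(t)=\min_{i\in\partial^{\mathrm{in}}(j)}\bigl(R'_i(t)-R_j(t)\bigr)$ and, on any activated incoming link, is willing to receive packets from the set $\{R_j(t)+1,\dots,R_j(t)+X'_j(t)\}$ (clipped by the link capacity $c_{ij}$). To establish (b), pick any packet index $p$ in this candidate set and any in-neighbor $i\in\partial^{\mathrm{in}}(j)$. By construction $p\le R_j(t)+X'_j(t)\le R'_i(t)$, and by the monotonicity inequality above $R'_i(t)\le R_i(t)$, so $p\le R_i(t)$, meaning node $i$ has already received packet $p$ by time $t$. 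Thus every packet $\pi'$ attempts to forward to $j$ is simultaneously present at all in-neighbors, which is exactly the constraint defining $\Pi^*$.

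For (a), the in-order property is immediate: $\pi'$ delivers to $j$ only packets whose indices lie in the contiguous block starting at $R_j(t)+1$, so the reception counter $R_j(t)$ advances by receiving the next-in-sequence packet $R_j(t)+1$ before any higher-indexed packet is accepted. This is precisely the mechanism by which policies in $\Pi^{\mathrm{in-order}}$ maintain serial delivery; $\pi'$ inherits this packet-scheduling rule and only tightens the eligibility test by using $R'_i$ in place of $R_i$, so no out-of-order reception is introduced.

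I do not anticipate any serious obstacle here; the proof is essentially a one-line monotonicity argument. The only care needed is to observe that using stale state can only \emph{shrink} the set of candidate packets ($X'_j(t)\le X_j(t)$), never enlarge it, so $\pi'$ is strictly more conservative than $\pi^*$ with perfect information and cannot violate either $\Pi^*$-condition. Consequently $\pi'\in\Pi^*$, as claimed.
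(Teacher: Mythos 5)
Your proposal is correct and follows essentially the same route as the paper: both rest on the single observation that $R_i(t)$ is non-decreasing in $t$, hence $R'_i(t)\leq R_i(t)$, so stale state information can only shrink the set of packets eligible for transmission and thus cannot violate either the in-order or the in-neighbor constraint defining $\Pi^*$. Your write-up merely spells out the two conditions a bit more explicitly than the paper's brief argument.
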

\end{framed}
\begin{proof}
See Appendix \eqref{pi_class_lemma_proof}. 
\end{proof}
The above lemma states that the policy $\pi'$ inherits the in-order delivery property and the in-neighbour packet delivery constraint of the policy-space $\Pi^*$.\\
 
From Eqn. \eqref{weight_comp} it follows that, computation of link-weights $\{W_{ij}(t), i\in \partial^{\mathrm{in}}(j)\}$ by node $j$ requires packet-state information of the nodes that are located within $2$-hops from the node $j$. Thus, it is natural to expect that with an ergodic state-update process, the weights $W_{ij}'(t)$, computed from the imperfect packet-state information, will not differ too much from the true weights $W_{ij}(t)$, on the average. Indeed, we can bound the difference between the link-weights $W'_{ij}(t)$, used by policy $\pi'$ and the true link-weights $W_{ij}(t)$, as follows
\begin{framed}
\begin{lemma} \label{update_lemma}
There exists a finite constant $C$ such that, the expected weight $W_{ij}'(t)$ of the link $(ij)$, locally computed by the node $j$ using the random update process, differs from the true link-weight $W_{ij}(t)$ by at most $C$, i.e.
\begin{eqnarray}
|\mathbb{E}W_{ij}'(t) -W_{ij}(t)| \leq C
\end{eqnarray}
The expectation above is taken with respect to the random packet-state update process.
\end{lemma}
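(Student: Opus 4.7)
The plan is to bound $|\mathbb{E}W'_{ij}(t) - W_{ij}(t)|$ by exploiting the fact that $W_{ij}$ is a short linear combination of the virtual-queue variables $X_\ell$, each of which is a $1$-Lipschitz function of the $R$-values of nearby nodes. Since $R_i(\cdot)$ is non-decreasing and can only grow by at most $A(s)$ packets per slot, any mismatch between the true $R_i(t)$ and the stale value $R_i(t-T_{(ij)}(t))$ used at node $j$ is controlled by the cumulative arrivals over a time-window with finite mean length $\mathbb{E}[T_{(ij)}(t)]$.

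Concretely, I would first write
\begin{eqnarray*}
W_{ij}(t) - W'_{ij}(t) = \big(X_j(t) - X'_j(t)\big) - \sum_{k\in K_j(t)}\big(X_k(t)-X'_k(t)\big),
\end{eqnarray*}
where $X'_\ell(t)$ denotes the quantity $X_\ell(t)$ recomputed from whatever stale $R$-values are locally available at node $j$. Using $|\min_i a_i - \min_i b_i|\leq \max_i |a_i-b_i|$, each term is controlled by $\max |R_i(t) - R_i(t - T_e(t))|$ ranging over the edges $e$ in the $2$-hop neighborhood of $j$ (the set that Eqn.~\eqref{weight_comp} depends on). Next, since packets originate only at the source $\texttt{r}$ and $R_i(\cdot)$ is non-decreasing,
\begin{eqnarray*}
0\leq R_i(t) - R_i\big(t-T_e(t)\big) \leq \sum_{s=t-T_e(t)}^{t-1} A(s).
\end{eqnarray*}
Taking expectations and invoking Wald's identity — legitimate because the i.i.d.\ arrival process $\{A(s)\}$ is independent of the configuration-driven delay $T_e(t)$ — yields $\mathbb{E}|R_i(t) - R_i(t - T_e(t))| \leq \lambda\cdot \mathbb{E}[T_e(t)]$. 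Summing over the finitely many edges in the $2$-hop neighborhood of $j$, using $\sum_e \mathbb{E}[T_e(t)] = \mathbb{E}[T(t)] < \infty$ uniformly in $t$ (a consequence of the finite-state positive-recurrence of the configuration Markov chain and standard first-passage bounds), produces the required finite constant $C$.

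The main obstacle will be the Wald-type step: $T_e(t)$ is not an a priori stopping time of the arrival filtration but is determined by the configuration process, so one needs to condition on the entire delay vector $(T_e(t))_e$ and use independence from $\{A(s)\}$ to pull the expectation inside the sum. A subtler issue is ensuring that $C$ can be chosen uniformly in $t$; this reduces to uniformly bounding $\mathbb{E}[T_e(t)]$, which follows from the stationary distribution of the Markov chain assigning positive mass to every configuration and hence giving geometrically decaying tails for the inter-visit times to any fixed link state.
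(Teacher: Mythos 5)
Your decomposition of $W_{ij}-W'_{ij}$ into the $X$-terms, and the Lipschitz observation $|\min_i a_i - \min_i b_i|\le \max_i|a_i-b_i|$, are both fine and mirror the paper's structure. The proof breaks, however, at the key inequality
\begin{equation*}
R_i(t) - R_i\bigl(t - T_e(t)\bigr) \;\le\; \sum_{s=t-T_e(t)}^{t-1} A(s).
\end{equation*}
This is false. The quantity $R_i(t)-R_i(t-T)$ is the number of packets node $i$ \emph{receives} during the window $[t-T,t)$, and an interior node $i\neq\texttt{r}$ can drain a backlog accumulated long before $t-T$: if the source holds many undelivered packets at time $t-T$ and no new packets arrive in $[t-T,t)$, node $i$ can still receive up to $c_{\max}$ packets per slot over that entire window, so the left side can be on the order of $T\,c_{\max}$ while your right side is zero. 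Arrivals at the source during the window constrain only $R_{\texttt{r}}(t)-R_{\texttt{r}}(t-T)$, not the reception rate of downstream nodes.

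The correct ingredient, which the paper uses, is the \emph{link-capacity} bound: since node $i$ can receive at most $c_{\max}$ packets in any slot, $R_i(t) - R_i(t-T) \le T\,c_{\max}$ deterministically, conditioned on $T$. Feeding that through your $X$- and $W$-decomposition gives $|W_{ij}(t)-W'_{ij}(t)|\le (\text{const})\cdot c_{\max}\cdot T(t)$, and then a single expectation over the staleness $T(t)$ (whose mean is finite by the positive-recurrence assumption) yields $C = n\,c_{\max}\,\mathbb{E}[T]$. No Wald-type identity, and hence no delicate independence argument between the arrival process and the configuration-driven delay, is needed. As it stands your argument would give a bound depending on $\lambda$, which is both unnecessary and unjustified; replace the arrival-sum bound with the capacity bound and the rest of your outline goes through.
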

\end{framed}
\begin{proof}
See Appendix \eqref{update_lemma_proof}
\end{proof}

From lemma \eqref{update_lemma} it follows that the policy $\pi'$, in which link-weights are computed using imperfect packet-state information is also a throughput-optimal broadcast policy for a wireless DAG. Its proof is very similar to the proof of Theorem \eqref{pi_star_optimality}. However, since the policy $\pi'$ makes scheduling decision using $\bm{W}'(t)$, instead of $\bm{W}(t)$, we need to appropriately bound the differences in drift using the Lemma \eqref{update_lemma}. The technical details are provided in Appendix \eqref{pi_dash_optimality_theorem_proof}. 
\begin{framed}
\begin{theorem} \label{pi_dash_optimality_theorem}
The policy $\pi'$ is a throughput-optimal broadcast algorithm in a time-varying wireless $\mathrm{DAG}$. 
\end{theorem}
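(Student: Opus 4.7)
The plan is to follow the two-step template used in the proof of Theorem~\ref{pi_star_optimality}. First, Lemma~\ref{pi_class_lemma} guarantees that $\pi' \in \Pi^*$, so the virtual-queue Lindley recursion of Lemma~\ref{lindley_lemma} continues to hold for the process $\{\bm{X}(t)\}$. Consequently, by Lemma~\ref{stability_lemma}, it suffices to prove that the virtual queues $\bm{X}(t)$ are rate-stable under $\pi'$ for every arrival rate $\lambda < \lambda^*$. This reduces the theorem to a pure stability question, and the remaining work is a Lyapunov drift calculation adapted from the proof of Lemma~\ref{stabilizing_X_lemma}.

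For the drift analysis I would reuse the quadratic Lyapunov function $L(\bm{X}(t)) = \tfrac{1}{2}\sum_{j \neq \texttt{r}} X_j^2(t)$ and compute its one-step conditional drift by squaring the Lindley recursion \eqref{dyn}. This produces the usual bounded quadratic remainder $B$ plus a linear term that, under $\pi^*$ with the \emph{true} weights $\bm{W}(t)$, is controlled by max-weight optimality. Under $\pi'$, however, the activation $\bm{s}'(t)$ maximizes $\bm{s}\cdot(\bm{c}\odot\bm{W}'(t))$ rather than $\bm{s}\cdot(\bm{c}\odot\bm{W}(t))$. I would therefore write
\begin{eqnarray*}
\bm{s}'(t)\cdot(\bm{c}\odot\bm{W}(t)) = \bm{s}'(t)\cdot(\bm{c}\odot\bm{W}'(t)) + \bm{s}'(t)\cdot\bigl(\bm{c}\odot(\bm{W}(t)-\bm{W}'(t))\bigr)
\end{eqnarray*}
and apply max-weight optimality with respect to $\bm{W}'(t)$ against any comparison activation $\tilde{\bm{s}}(t) \in \mathcal{M}_{\bm{\sigma}(t)}$. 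This yields a drift inequality differing from the one in Lemma~\ref{stabilizing_X_lemma} by an error term involving $\bigl(\bm{c}\odot(\bm{W}(t)-\bm{W}'(t))\bigr)$ dotted with $\bm{s}'(t)$ and $\tilde{\bm{s}}(t)$.

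The main obstacle, and where Lemma~\ref{update_lemma} does the heavy lifting, is bounding this error term in expectation. Since activation vectors are $\{0,1\}$-valued and both $|E|$ and $\max_e c_e$ are finite, summing the componentwise bound $|\mathbb{E} W'_{ij}(t) - W_{ij}(t)| \leq C$ from Lemma~\ref{update_lemma} over all edges shows that the total error contribution is uniformly bounded by a constant $C'$ depending only on the network parameters. Crucially, $C'$ is independent of $\bm{X}(t)$. Choosing $\tilde{\bm{s}}(t)$ to be the capacity-achieving stationary randomized activation from Theorem~\ref{cap_th} for a rate $\lambda + \delta < \lambda^*$, the conditional drift takes the form
\begin{eqnarray*}
\mathbb{E}[\Delta L(t) \mid \bm{X}(t)] \leq (B + C') - \delta\sum_{j \neq \texttt{r}} X_j(t),
\end{eqnarray*}
which is strictly negative outside a bounded set. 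A standard Foster-Lyapunov argument, identical to the one in the proof of Lemma~\ref{stabilizing_X_lemma}, then yields rate-stability of $\bm{X}(t)$, which combined with Lemma~\ref{stability_lemma} completes the proof. The subtlety I would verify carefully is the correct coupling between the expectation over the random packet-state update process (appearing in Lemma~\ref{update_lemma}) and the conditional expectation in the drift; this is valid because the update times depend only on the configuration process $\{\bm{\sigma}(t)\}$ and are independent of the packet arrivals.
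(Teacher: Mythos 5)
Your proposal is correct and follows essentially the same route as the paper: establish $\pi'\in\Pi^*$ via Lemma \ref{pi_class_lemma}, reduce throughput-optimality to rate-stability of $\bm{X}(t)$ via Lemma \ref{stability_lemma}, redo the quadratic Lyapunov drift of Lemma \ref{stabilizing_X_lemma} using the max-weight property of $\pi'$ with respect to $\bm{W}'(t)$, absorb the weight mismatch into a state-independent constant via Lemma \ref{update_lemma}, and compare against the stationary randomized policy to obtain negative drift. The only point to tighten is the handling of the correlation between $\bm{s}'(t)$ and $\bm{W}'(t)$ in the error term, which you already flag and which is justified by the pathwise bound $|W'_{ij}(t)-W_{ij}(t)|\leq n c_{\max} T(t)$ established inside the proof of Lemma \ref{update_lemma}, exactly as the paper implicitly does.
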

\end{framed}

\section{Numerical Simulation} \label{simulation_section}
We numerically simulate the performance of the proposed dynamic broadcast-policy on the $3\times 3$ grid network, shown in Figure \ref{grid_network_fig}. All links are assumed to be of unit capacity. Wireless link activations are subject to primary interference constraints, i.e., at every slot, we may activate a subset of links 
\begin{figure}
\centering
\includegraphics[scale=0.75]{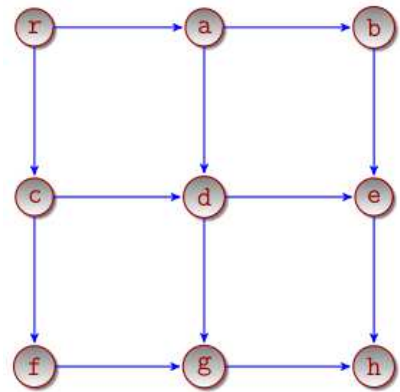}
\caption{A $3\times 3$ grid network.}
\label{grid_network_fig}
\end{figure}
which form a \emph{Matching} \cite{west2001introduction} of the underlying topology. External packets arrive at the source node $\texttt{r}$ according to a Poisson process of rate $\lambda$ packets per slot. The following proposition shows that, the broadcast capacity $\lambda^*_{\mathrm{stat}}$ of the static $3\times 3$ wireless grid (i.e., when all links are ON with probability $1$ at every slot) is $\frac{2}{5}$. 
\begin{framed}
\begin{proposition} \label{grid_network_capacity}
The broadcast-capacity $\lambda^*_{\mathrm{stat}}$ of the static $3\times 3$ wireless grid-network in Figure \ref{grid_network_fig} is $\frac{2}{5}$. 
\end{proposition}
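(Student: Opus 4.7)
The plan is to specialize the capacity characterization of Theorem~\ref{cap_th} to the static setting ($|\Xi|=1$, all links always ON), which reduces the question to solving the LP
$$
\lambda^*_{\mathrm{stat}} \;=\; \max_{\bm{\beta} \in \mathrm{conv}(\mathcal{M})}\; \min_{j \in V\setminus\{\texttt{r}\}}\; \bm{u}_j \cdot \bm{\beta}
$$
on the $3\times 3$ grid. I label the nodes $1,\ldots,9$ row-by-row with $\texttt{r}=1$ at the top-left corner and orient each edge away from the source, so that the network is a DAG of the kind covered by Theorem~\ref{cap_th}.

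For the \emph{upper bound} $\lambda^*_{\mathrm{stat}} \leq 2/5$, I propagate the single-node receive constraints outward from $\texttt{r}$. Nodes $2$ and $4$ each have $\texttt{r}$ as their only in-neighbour, so $\bm{u}_2\cdot \bm{\beta}, \bm{u}_4\cdot \bm{\beta} \geq \lambda$ force $\beta_{(1,2)}, \beta_{(1,4)} \geq \lambda$. Likewise, corners $3$ and $7$ have unique in-neighbours $2$ and $4$, giving $\beta_{(2,3)}, \beta_{(4,7)} \geq \lambda$. The primary-interference constraint $\sum_{e\ni 2}\beta_e \leq 1$ then yields $\beta_{(2,5)} \leq 1-2\lambda$, and analogously $\beta_{(4,5)} \leq 1-2\lambda$. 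Finally, $\bm{u}_5\cdot \bm{\beta} = \beta_{(2,5)} + \beta_{(4,5)} \geq \lambda$ combines these to give $\lambda \leq 2(1-2\lambda)$, i.e., $\lambda^*_{\mathrm{stat}} \leq 2/5$.

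For the matching \emph{lower bound}, I exhibit the fractional schedule essentially forced by the argument above:
$$
\beta_{(1,2)} = \beta_{(1,4)} = \beta_{(2,3)} = \beta_{(4,7)} = \beta_{(3,6)} = \beta_{(7,8)} = \beta_{(6,9)} = \frac{2}{5},\qquad \beta_{(2,5)} = \beta_{(4,5)} = \frac{1}{5},
$$
with all other entries zero. A direct computation confirms $\bm{u}_j\cdot\bm{\beta} = 2/5$ for every $j \neq \texttt{r}$ and $\sum_{e\ni v}\beta_e \leq 1$ for every $v$. Because the $3\times 3$ grid is bipartite, Edmonds' theorem implies that $\mathrm{conv}(\mathcal{M})$ coincides with the fractional matching polytope, so these per-vertex inequalities already certify $\bm{\beta}\in \mathrm{conv}(\mathcal{M})$.

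The only subtle point is choosing the flow through the downstream sub-grid $\{3,5,6,8,9\}$ so that every node receives at rate exactly $2/5$; the path-like routing $3 \to 6 \to 9$ and $7 \to 8$, together with the forced $1/5$ splits at the centre, accomplishes this with slack in the matching constraint at node $5$, making the feasibility verification immediate.
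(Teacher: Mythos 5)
Your proof is correct and follows essentially the same route as the paper: the converse is the identical interference-plus-cut argument around the source's two neighbours and the centre node, and the achievability exhibits (up to a symmetric choice at the bottom-right corner) the same fractional activation vector attaining value $2/5$ on every single-node cut. The only cosmetic difference is that you certify $\bm{\beta}\in\mathrm{conv}(\mathcal{M})$ via the bipartite matching-polytope (degree constraints suffice), whereas the paper decomposes the same average rates explicitly into five matchings activated uniformly at random; both certificates are valid.
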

\end{framed}
See Appendix \eqref{grid_network_capacity_proof} for the proof. \\
In our numerical simulation, the time-variation of the network is modeled as follows: link-states are assumed to evolving in an i.i.d. fashion; each link is ON with probability $p$ at every slot, independent of everything else. Here $0<p\leq 1$ is the \emph{connectivity-parameter} of the network. Thus, for $p=1$ we recover the static network model of \cite{sinha_DAG}. We also assume that the nodes have imperfect packet-state information as in Section \ref{intermittent_connectivity}. Hence, two nodes $i$ and $j$ can directly exchange packet state-information, only when the link $(i,j)$ (if any) is ON. \\
The average broadcast-delay $D^{\pi'}_p(\lambda)$ is plotted in Figure \ref{lambda_delay_plot} as a function of the packet arrival rate $\lambda$. The broadcast-delay of a packet is defined as the number of slots the packet takes to reach all nodes in the network after its arrival.  Because of the throughput-optimality of the policy $\pi'$ (Theorem \eqref{pi_dash_optimality_theorem}), the broadcast-capacity $\lambda^*(p)$ of the network, for a given value of $p$, may be empirically evaluated from the $\lambda$-intercept of vertical asymptote of the $D^{\pi'}_p(\lambda)-\lambda$ curve. 

 As evident from the plot, for $p=1$, the proposed dynamic algorithm achieves all broadcast rates below $\lambda^*_{\mathrm{stat}}=\frac{2}{5}=0.4$. This shows the throughput-optimality of the algorithm $\pi'$. \\
It is evident from the Figure \ref{lambda_delay_plot} that the broadcast capacity $\lambda^*(p)$ is non-decreasing in the connectivity-parameter $p$, i.e., $\lambda^*(p_1)\geq \lambda^*(p_2)$ for $p_1\geq p_2$. We observe that, with i.i.d. connectivity, the capacity bounds given in Lemma \eqref{bd_lemma} are not tight, in general. Hence the lower-bound of $p\lambda^*_{\mathrm{stat}}$ is a pessimistic estimate of the actual broadcast capacity $\lambda^*(p)$ of the DAG. The plot also reveals that, $D^{\pi'}_p(\lambda)$ is non-decreasing in $\lambda$ for a fixed $p$ and non-increasing in $p$ for a fixed $\lambda$, as expected.  

\begin{figure} [!ht]
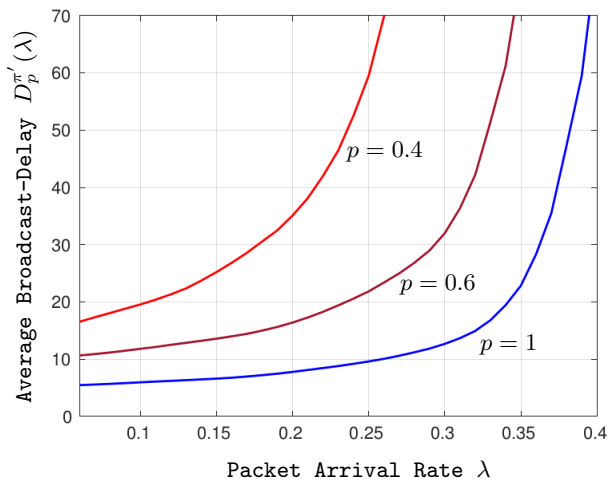
 
\centering
\begin{overpic}[width=0.5\textwidth]{./delay_plot_grid}
\put(35,-1){\texttt{Packet Arrival Rate $\lambda$}}
\put(2,10){\rotatebox{90}{\texttt{Average Broadcast-Delay $D^{\pi'}_p(\lambda)$}}}
\put(73,18){$p=1$}
\put(61,27){$p=0.6$}
\put(53,47){$p=0.4$}
\end{overpic}
\caption{Plot of average broadcast-delay $D^{\pi'}_p(\lambda)$, as a function of the packet arrival rates $\lambda$. The underlying wireless network is the $3\times 3$ grid, shown in Figure \ref{grid_network_fig}, with primary interference constraints.}
\label{lambda_delay_plot}
\end{figure}

\section{Conclusion} \label{conclusion}
In this paper we studied the problem of throughput-optimal broadcasting in wireless directed acyclic networks with point-to-point links and time-varying connectivity. We characterized the broadcast-capacity of such networks and derived efficient algorithms for computing it, both exactly and approximately. Next, we proposed a throughput-optimal broadcast policy for such networks. This algorithm does not require any spanning tree to be maintained and operates based on local information, which is updated sporadically. The algorithm is robust and does not require statistics of the arrival or the connectivity process, thus making it useful for mobile wireless networks. The theoretical results are supplemented with illustrative numerical simulations. Future work would be to remove the restriction of the directed acyclic topology. It would also be interesting design broadcast algorithms for wireless networks with point-to-multi-point links.
\bibliographystyle{abbrv}
\bibliography{MIT_broadcast_bibliography}

\begin{thebibliography}{10}

\bibitem{agrawal2009correlated}
P.~Agrawal and N.~Patwari.
\newblock Correlated link shadow fading in multi-hop wireless networks.
\newblock {\em Wireless Communications, IEEE Transactions on}, 8(8):4024--4036,
  2009.

\bibitem{bertsimas1997introduction}
D.~Bertsimas and J.~N. Tsitsiklis.
\newblock {\em Introduction to linear optimization}, volume~6.
\newblock Athena Scientific Belmont, MA, 1997.

\bibitem{beygelzimer2008benefits}
A.~Beygelzimer, A.~Kershenbaum, K.-W. Lee, and V.~Pappas.
\newblock The benefits of directional antennas in heterogeneouswireless ad-hoc
  networks.
\newblock In {\em Mobile Ad Hoc and Sensor Systems. MASS 2008. 5th IEEE
  International Conference on}, pages 442--449. IEEE.

\bibitem{chu2001enabling}
Y.~Chu, S.~Rao, S.~Seshan, and H.~Zhang.
\newblock Enabling conferencing applications on the internet using an overlay
  muilticast architecture.
\newblock {\em ACM SIGCOMM computer communication review}, 31(4):55--67, 2001.

\bibitem{algorithms}
S.~Dasgupta, C.~H. Papadimitriou, and U.~Vazirani.
\newblock {\em Algorithms}.
\newblock McGraw-Hill, Inc., 2006.

\bibitem{diestel2005graph}
R.~Diestel.
\newblock Graph theory.
\newblock {\em Grad. Texts in Math}, 2005.

\bibitem{gallager2012discrete}
R.~G. Gallager.
\newblock {\em Discrete stochastic processes}, volume 321.
\newblock Springer Science \& Business Media, 2012.

\bibitem{gandhi2003minimizing}
R.~Gandhi, S.~Parthasarathy, and A.~Mishra.
\newblock Minimizing broadcast latency and redundancy in ad hoc networks.
\newblock In {\em Proceedings of the 4th ACM international symposium on Mobile
  ad hoc networking \& computing}, pages 222--232. ACM, 2003.

\bibitem{ge2004overlay}
M.~Ge, S.~V. Krishnamurthy, and M.~Faloutsos.
\newblock Overlay multicasting for ad hoc networks.
\newblock In {\em Proceedings of Third Annual Mediterranean Ad Hoc Networking
  Workshop}, 2004.

\bibitem{Ho2005}
T.~Ho and H.~Viswanathan.
\newblock Dynamic algorithms for multicast with intra-session network coding.
\newblock In {\em Proc. 43rd Annual Allerton Conference on Communication,
  Control, and Computing}, 2005.

\bibitem{huang2007minimum}
S.~C. Huang, P.-J. Wan, X.~Jia, H.~Du, and W.~Shang.
\newblock Minimum-latency broadcast scheduling in wireless ad hoc networks.
\newblock In {\em 26th IEEE INFOCOM 2007}.

\bibitem{multicast_survey}
L.~Junhai, Y.~Danxia, X.~Liu, and F.~Mingyu.
\newblock A survey of multicast routing protocols for mobile ad-hoc networks.
\newblock {\em Communications Surveys Tutorials, IEEE}, 11(1):78--91, First
  2009.

\bibitem{Kamthe:2013:IWL:2555947.2529991}
A.~Kamthe, M.~A. Carreira-Perpi\~{n}\'{a}n, and A.~E. Cerpa.
\newblock Improving wireless link simulation using multilevel markov models.
\newblock {\em ACM Trans. Sen. Netw.}, 10(1):17:1--17:28, Dec. 2013.

\bibitem{shroff-tutorial}
X.~Lin, N.~Shroff, and R.~Srikant.
\newblock A tutorial on cross-layer optimization in wireless networks.
\newblock {\em Selected Areas in Communications, IEEE Journal on},
  24(8):1452--1463, Aug 2006.

\bibitem{lindley1952theory}
D.~V. Lindley.
\newblock The theory of queues with a single server.
\newblock In {\em Mathematical Proceedings of the Cambridge Philosophical
  Society}, volume~48, pages 277--289. Cambridge Univ Press, 1952.

\bibitem{milcom_app}
J.~Macker, J.~Klinker, and M.~Corson.
\newblock Reliable multicast data delivery for military networking.
\newblock In {\em Military Communications Conference, 1996. MILCOM '96,
  Conference Proceedings, IEEE}, volume~2, pages 399--403 vol.2, Oct 1996.

\bibitem{matouvsek2002lectures}
J.~Matou{\v{s}}ek.
\newblock {\em Lectures on discrete geometry}, volume 108.
\newblock Springer New York, 2002.

\bibitem{neely2010stochastic}
M.~J. Neely.
\newblock Stochastic network optimization with application to communication and
  queueing systems.
\newblock {\em Synthesis Lectures on Communication Networks}, 3(1):1--211,
  2010.

\bibitem{patwari2008effects}
N.~Patwari and P.~Agrawal.
\newblock Effects of correlated shadowing: Connectivity, localization, and rf
  tomography.
\newblock In {\em Information Processing in Sensor Networks, 2008. IPSN'08.
  International Conference on}, pages 82--93. IEEE, 2008.

\bibitem{edmonds}
R.~Rustin.
\newblock {\em Combinatorial Algorithms}.
\newblock Algorithmics Press, 1973.

\bibitem{swati}
S.~Sarkar and L.~Tassiulas.
\newblock A framework for routing and congestion control for multicast
  information flows.
\newblock {\em Information Theory, IEEE Transactions on}, 48(10):2690--2708,
  2002.

\bibitem{schrijver2003combinatorial}
A.~Schrijver.
\newblock {\em Combinatorial optimization: polyhedra and efficiency},
  volume~24.
\newblock Springer, 2003.

\bibitem{sinha_DAG}
A.~Sinha, G.~Paschos, C.~ping Li, and E.~Modiano.
\newblock Throughput-optimal broadcast on directed acyclic graphs.
\newblock In {\em Computer Communications (INFOCOM), 2015 IEEE Conference on},
  pages 1248--1256.

\bibitem{ip_tv_multicast}
D.~Smith.
\newblock Ip tv bandwidth demand: Multicast and channel surfing.
\newblock In {\em INFOCOM 2007. 26th IEEE International Conference on Computer
  Communications. IEEE}, pages 2546--2550, May 2007.

\bibitem{tassiulas}
L.~Tassiulas and A.~Ephremides.
\newblock Stability properties of constrained queueing systems and scheduling
  policies for maximum throughput in multihop radio networks.
\newblock {\em Automatic Control, IEEE Transactions on}, 37(12):1936--1948,
  1992.

\bibitem{towsley2008rate}
D.~Towsley and A.~Twigg.
\newblock Rate-optimal decentralized broadcasting: the wireless case, 2008.

\bibitem{tse2005fundamentals}
D.~Tse and P.~Viswanath.
\newblock {\em Fundamentals of wireless communication}.
\newblock Cambridge university press, 2005.

\bibitem{west2001introduction}
D.~B. West et~al.
\newblock {\em Introduction to graph theory}, volume~2.
\newblock Prentice hall Upper Saddle River, 2001.

\bibitem{yuan2006cross}
J.~Yuan, Z.~Li, W.~Yu, and B.~Li.
\newblock A cross-layer optimization framework for multihop multicast in
  wireless mesh networks.
\newblock {\em Selected Areas in Communications, IEEE Journal on}, 24(11),
  2006.

\end{thebibliography}
\section{Appendix}
\subsection{Proof of Theorem \ref{algo_comp}} \label{Algo_proof} 
Under the primary interference constraint, the set of feasible activations of the graphs are \emph{matchings} \cite{west2001introduction}. To solve for the optimal broadcast capacity in a time-varying network, first we recast Eqn. \eqref{capacity_expr} as an LP. Although this LP has exponentially many constraints, using a well-known separation oracle for matchings, we show how to solve this LP in strongly-polynomial time via the ellipsoid algorithm \cite{bertsimas1997introduction}.\\

 For a subset of edges $E'\subset E$, let $\chi^{E'}$ be the incidence vector, where $\chi^{E'}(e)=1$ if $e \in E'$ and is zero otherwise. Let
\begin{eqnarray*} 
&&\mathcal{P}_\text{matching}(\mathcal{G}(V,E))= \\
&&\textbf{convexhull}(\{\chi^{M}|M \text{ is a matching in } G(V,E)\})
\end{eqnarray*}
We have the following classical result by Edmonds \cite{schrijver2003combinatorial}.
\begin{framed}
\begin{theorem}\label{match_poly}
The set $\mathcal{P}_\textrm{matching}(\mathcal{G}(V,E))$ is characterized by the set of all $\bm{\beta} \in \mathbb{R}^{|E|}$ such that : 
\begin{eqnarray}\label{mat_const}
\beta_e &\geq& 0 \hspace{10pt} \forall e \in E  \\
\sum_{e \in \partial^{\mathrm{in}}(v)\cup\partial^{\mathrm{out}}(v)} \beta_e &\leq& 1 \hspace{10pt} \forall v \in V \nonumber\\
\sum_{e\in E[U]}\beta_e &\leq& \frac{|U|-1}{2}; \hspace{10pt} U\subset V, \hspace{5pt}|U| \text{ odd} \nonumber 
\end{eqnarray}
\end{theorem}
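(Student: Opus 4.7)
My plan is to establish the two inclusions separately; let $P$ denote the polyhedron defined by the three families of inequalities in~\eqref{mat_const}.

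For the inclusion $\mathcal{P}_{\textrm{matching}}(\mathcal{G}(V,E)) \subset P$, I would check that every matching-incidence vector $\chi^{M}$ satisfies all three constraints: non-negativity is immediate; the degree constraint holds because a matching covers each vertex at most once; and for any odd $U \subset V$, the set $M \cap E[U]$ is a matching on $|U|$ vertices, so its cardinality is at most $\lfloor |U|/2 \rfloor = (|U|-1)/2$. Taking convex hulls then gives the inclusion.

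The substantive direction is $P \subset \mathcal{P}_{\textrm{matching}}(\mathcal{G})$, i.e., every extreme point of $P$ is integral. I would proceed by minimum counterexample on $|V|+|E|$: assume $(\mathcal{G}, \bm{\beta}^{*})$ is a smallest pair where $\bm{\beta}^{*}$ is a fractional extreme point of $P$. Easy reductions rule out $\beta_{e}^{*} \in \{0,1\}$ (delete the edge or contract its endpoints to get a smaller counterexample) and rule out any vertex of support-degree $\le 1$ (the relevant degree constraint then becomes slack, and $\bm{\beta}^{*}$ can be written as a proper convex combination of two feasible points, contradicting extremality). Hence the support is a graph of minimum degree $\ge 2$ with every $\beta_{e}^{*} \in (0,1)$.

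The core step is an uncrossing argument on the tight odd-set constraints. If two tight odd sets $U_{1}, U_{2}$ cross, then by a parity analysis either $(U_{1} \cap U_{2},\, U_{1} \cup U_{2})$ or $(U_{1} \setminus U_{2},\, U_{2} \setminus U_{1})$ is a pair of odd sets whose constraints must also be tight, using supermodularity of $E[\cdot]$ together with the degree inequalities. Standard linear-algebra bookkeeping then shows that the original tight constraints lie in the span of the uncrossed ones, so one may assume the family of tight odd-set inequalities is laminar. I expect the hard part to be the final counting: because $\bm{\beta}^{*}$ is an extreme point, the number of linearly independent tight inequalities must equal $|E|$, and one must combine the laminar structure of the tight odd sets with the minimum-degree-$\ge 2$ condition on the support to contradict the edge count. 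This delicate dimension argument—balancing contributions from laminar odd sets against the active degree constraints on a support with no degree-one vertex—is the heart of Edmonds' theorem and the main obstacle.
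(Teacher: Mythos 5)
This statement is Edmonds' matching-polytope theorem, which the paper does not prove at all: it is invoked as a classical result with a citation to Schrijver, and the only role it plays is to supply a poly-time separation oracle for the LP in the capacity-computation argument. So there is no ``paper proof'' to match; what matters is whether your sketch would stand on its own as a proof of Edmonds' theorem.

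As it stands it does not, because you have explicitly left the decisive step unproved. The easy inclusion and the preliminary reductions (no $\beta_e^*\in\{0,1\}$, no support-degree-one vertex) are fine, and uncrossing tight odd sets into a laminar family is indeed the standard route — though even there the case where $|U_1\cap U_2|$ is even needs more care than ``supermodularity plus the degree inequalities'': one must argue that the tight odd-set equalities for $U_1,U_2$ force tight degree constraints on the vertices of $U_1\cap U_2$ and zero $\beta$-weight on the edges between $U_1\setminus U_2$ and $U_2\setminus U_1$ before $U_1\setminus U_2$ and $U_2\setminus U_1$ can be declared tight. The genuine gap is the final dimension count: for an extreme point one needs $|E|$ linearly independent tight constraints drawn from the laminar odd-set family and the tight degree constraints, and the contradiction with a min-degree-$\ge 2$ fractional support is exactly the content of the theorem. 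You name this as ``the main obstacle'' and stop, so the argument is a proof outline, not a proof. Either carry out that counting (e.g., by bounding the size of a laminar family on the support and comparing with $|E|\ge |V|$ on a min-degree-$2$ graph, or by switching to the induction that contracts a nontrivial tight odd set), or do what the paper does and cite the result; in its current form the proposal cannot replace the citation.
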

\end{framed}
Here $E[U]$ is the set of edge with both end points in U. 

Thus, following Eqn. \eqref{capacity_expr}, the broadcast capacity of a DAG can be obtained by the following LP :
\begin{eqnarray} \label{LP_match}
\max \lambda 
\end{eqnarray}
Subject to, 
\begin{eqnarray}
\lambda &\leq& \sum_{e\in \partial^\mathrm{in}(v)} c_{e}\big(\sum_{\sigma \in \Xi}p(\sigma) \beta_{\sigma,e}\big), \hspace{2pt} \forall v \in V\setminus \{r\}\label{b_c_constraint}\\
\bm{\beta}_\sigma &\in& \mathcal{P}_\text{matching}(\mathcal{G}(V,E_\sigma)), \hspace{10pt} \forall \sigma \in \Xi \label{matching_constr}
\end{eqnarray}
The constraint corresponding to $\sigma \in \Xi$ in \eqref{matching_constr} refers to the set of linear constraints given in Eqn.\eqref{mat_const} corresponding to the graph $\mathcal{G}(V,E_\sigma)$, for each $\sigma \in \Xi$.\\
Invoking the equivalence of optimization and separation due to the ellipsoid algorithm \cite{bertsimas1997introduction}, it follows that the LP \eqref{LP_match} is solvable in poly-time, if there exists an efficient separator-oracle for the set of constraints \eqref{b_c_constraint} and \eqref{matching_constr}. With our assumption of polynomially many network configurations $|\Xi|(n)$,  there are only linearly many constraints ($n-1$, to be precise) in \eqref{b_c_constraint} with polynomially many variables in each constraint. Thus the  set of constraints \eqref{b_c_constraint} can be separated efficiently. 
Next we invoke a classic result from the combinatorial-optimization literature which shows the existence of efficient separators for the matching polytopes. 
\begin{framed}
\begin{theorem}{\cite{schrijver2003combinatorial}}
There exists a strongly poly-time algorithm, that given $\mathcal{G}=(V,E)$  and $\bm{\beta} : E \to \mathbb{R}^{|E|}$ determines if $\bm{\beta}$ satisfies \eqref{mat_const} or outputs an inequality from \eqref{match_poly} that is violated by $\bm{\beta}$.
\end{theorem}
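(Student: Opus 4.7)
The plan is to split the separation problem into two parts matching the two natures of the defining inequalities in \eqref{mat_const}: the polynomially many ``trivial'' constraints (non-negativity and degree), and the exponentially many blossom (odd-set) inequalities. The first class can be scanned directly; the second requires a reduction to a combinatorial min-cut problem. The two classes together comprise the facets of $\mathcal{P}_{\text{matching}}$ by Edmonds' theorem (Theorem \ref{match_poly}), so an algorithm that separates both classes is a complete separator oracle.

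First, I would examine each of the $|E|$ non-negativity constraints $\beta_e \geq 0$ and each of the $|V|$ degree constraints $\sum_{e\,\ni\,v} \beta_e \leq 1$ directly. This costs $O(|E|+|V|)$ arithmetic operations; if any constraint is violated, output it as the certificate and stop. Otherwise, define the vertex slacks $d(v) := 1 - \sum_{e\,\ni\,v} \beta_e \geq 0$, which will be the workhorse for the remaining step.

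Second, I would rewrite each blossom inequality in an equivalent cut form. For an odd set $U \subset V$, using $2\sum_{e\in E[U]}\beta_e + \sum_{e \in \delta(U)}\beta_e = \sum_{v \in U} \sum_{e \,\ni\, v}\beta_e$, the constraint $\sum_{e\in E[U]}\beta_e \leq (|U|-1)/2$ is equivalent to
\begin{equation*}
  \sum_{v \in U} d(v) \;+\; \sum_{e \in \delta(U)} \beta_e \;\geq\; 1.
\end{equation*}
Hence a violated blossom inequality corresponds to an odd $U$ minimizing the left-hand side and producing a value strictly less than $1$. Construct an auxiliary weighted graph $\mathcal{G}'$ by adjoining a new vertex $t$, joining $t$ to every $v \in V$ with edge weight $d(v)$, and keeping original edges with weights $\beta_e$. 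Then the cut separating $U \cup \{t\}$ from its complement in $\mathcal{G}'$ has weight exactly $\sum_{v \in U} d(v) + \sum_{e \in \delta(U)}\beta_e$. The task is therefore reduced to finding a minimum-weight cut in $\mathcal{G}'$ subject to the parity condition that the side not containing $t$ has an odd number of ``real'' vertices: a \emph{minimum odd cut}.

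The main obstacle, and the technical heart of the argument, is the efficient resolution of the minimum odd cut problem. I would invoke the Padberg--Rao procedure, which computes a Gomory--Hu tree of $\mathcal{G}'$ using $O(|V|)$ max-flow computations, and then shows by a parity/uncrossing argument that among the tree edges there must exist one whose fundamental cut is a minimum odd cut. Each max-flow is strongly polynomial, so the overall running time is strongly polynomial; if the minimum odd cut weight is below $1$, the corresponding blossom inequality is output as the violated constraint, otherwise $\bm{\beta}$ is certified feasible. The delicate part is establishing that a minimum odd cut is always realized by a Gomory--Hu tree edge; this rests on the standard submodularity/uncrossing lemma for cut functions combined with an odd-parity counting argument on the two pieces produced by uncrossing two crossing odd cuts.
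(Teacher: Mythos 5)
The paper does not actually prove this statement: it is quoted verbatim as a known result with the citation to Schrijver's book, and the subsequent argument for Theorem \ref{algo_comp} uses it as a black box inside the ellipsoid method. Your sketch supplies the standard proof that underlies the citation, namely the Padberg--Rao separation routine: check the non-negativity and degree constraints directly, rewrite each odd-set (blossom) inequality via the vertex slacks $d(v)=1-\sum_{e\ni v}\beta_e$ as $\sum_{v\in U}d(v)+\sum_{e\in\delta(U)}\beta_e\geq 1$, and reduce violation detection to a minimum odd cut computation in the auxiliary graph with the extra vertex $t$, solved through a Gomory--Hu tree with $O(|V|)$ strongly polynomial max-flows. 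This is correct in outline and is essentially the textbook argument, so it matches the cited source rather than diverging from the paper. Two small remarks: (i) the cut whose weight equals $\sum_{v\in U}d(v)+\sum_{e\in\delta(U)}\beta_e$ is the one separating $U$ from $(V\setminus U)\cup\{t\}$, i.e.\ $t$ lies on the side \emph{opposite} $U$; your sentence placing $t$ with $U$ gives instead $\sum_{v\in V\setminus U}d(v)+\sum_{e\in\delta(U)}\beta_e$, though your subsequent parity condition (``the side not containing $t$ has an odd number of real vertices'') is the correct one, so this is only a slip; (ii) the ``technical heart'' --- that some Gomory--Hu tree edge whose fundamental cut has the right parity attains the minimum odd cut, via uncrossing and a parity count (and the bookkeeping of whether $t$ is assigned odd or even parity when $|V|$ is odd) --- is itself invoked from Padberg--Rao rather than proved, which is acceptable here exactly because the paper treats the whole statement the same way.
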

\end{framed}
Hence, there exists an efficient separator for each of the constraints in \eqref{match_poly}. Since there are only polynomially many network configurations, this directly leads to Theorem \ref{algo_comp}. 
\newpage 
\subsection{Proof of Lemma \ref{conv_hull}} 
\begin{proof}
Fix a time $T$. For each configuration $\sigma \in \Xi$, let $\{t_{\sigma,i}\}_{i=1}^{T_\sigma}$ be the index of the time-slots up to time $T$ such that $\bm{\sigma}(t)=\bm{\sigma}$. Clearly we have,
\begin{eqnarray}
\sum_{\sigma \in \Xi} T_\sigma = T
\end{eqnarray}
Hence, we can rewrite 
\begin{eqnarray} \label{sum1}
\frac{1}{T}\sum_{t=1}^{T} \bm{s}^\pi (t, \bm{\sigma}(t)) = \sum_{\sigma \in \Xi} \frac{T_\sigma}{T} \frac{1}{T_\sigma}\sum_{i=1}^{T_\sigma} \bm{s}^\pi(t_{\sigma,i},\bm{\sigma})
\end{eqnarray}
Hence,
\begin{eqnarray} \label{main_eqn1}
\bm{u}\cdot\bigg(\frac{1}{T} \sum_{t=1}^{T}\bm{s}^{\pi}(t,\bm{\sigma}(t))\bigg) = \sum_{\sigma \in \Xi} \frac{T_\sigma}{T} \bm{u}\cdot \bigg(\frac{1}{T_\sigma}\sum_{i=1}^{T_\sigma} \bm{s}^\pi(t_{\sigma,i},\bm{\sigma})\bigg)
\end{eqnarray}
Since the process $\bm{\sigma}(t)$ is stationary ergodic, we have 
\begin{eqnarray}
\lim_{T \to \infty} \frac{T_\sigma}{T}=p(\sigma), \hspace*{5pt}\text{w.p. 1} \hspace{5pt} \forall \sigma \in \Xi
\end{eqnarray}
Using countability of $\Xi$ and invoking the union bound, we can strengthen the above conclusion as follows 
\begin{eqnarray}
\lim_{T \to \infty} \frac{T_\sigma}{T}=p(\sigma), \hspace*{5pt}\forall \sigma \in \Xi,\hspace{5pt} \text{w.p.}\hspace*{5pt} 1 
\end{eqnarray}
Hence from Eqn. \eqref{main_eqn1} we have,
\begin{eqnarray*} \label{main_eq22}
&&\min_{\bm{u} \in U}\liminf_{T \nearrow \infty} \bm{u}\cdot\bigg(\frac{1}{T} \sum_{t=1}^{T}\bm{s}^{\pi}(t,\bm{\sigma}(t))\bigg) \\
&=& \min_{\bm{u} \in U} \sum_{\sigma \in \Xi} p(\bm{\sigma}) \liminf_{T\to \infty} \bm{u}\cdot \bigg(\frac{1}{T_\sigma}\sum_{i=1}^{T_\sigma} \bm{s}^\pi(t_{\sigma,i},\bm{\sigma})\bigg), \hspace*{10pt}\text{w.p.} \hspace*{5pt}1
\end{eqnarray*}
Since $p(\sigma)>0, \forall \sigma \in \Xi$, the above implies that $T_\sigma \nearrow \infty \text{ as } T\nearrow \infty \forall \sigma, \hspace*{5pt}\text{w.p.}1$. 
In the rest of the proof we will concentrate on a typical sample path $\{\bm{\sigma}(t)\}_{t\geq 1}$ having the above property.\\
For each $\sigma \in \Xi$, define the sequence $\{\bm{\zeta}_{\sigma, T_\sigma}^{\pi}\}_{T_\sigma \geq 1}$
\begin{eqnarray}
\bm{\zeta}_{\sigma, T_\sigma}^{\pi} = \frac{1}{T_\sigma}\sum_{i=1}^{T_\sigma}\bm{s}^{\pi}(t_{\sigma,i},\bm{\sigma})
\end{eqnarray}
Since $\bm{s}^{\pi}(t_{\sigma,i}, \bm{\sigma}) \in \mathcal{M}_\sigma$ for all $i\geq 1$, convexity of the set $\mathcal{M}_\sigma$ implies that $\bm{\zeta}_{\sigma, T_\sigma}^{\pi} \in \mathcal{M}_\sigma$ for all $T_\sigma\geq 1$. Since the set $\mathcal{M}_\sigma$ is closed and bounded (and hence, compact) any sequence  in $\mathcal{M}_\sigma$ has a converging sub-sequence. Consider any set of converging sub-sequences $\{\bm{\zeta}_{\sigma, {T_{\sigma_k}}}^{\pi}\}_{k\geq 1}, \sigma \in \Xi$ such that, it achieves the following 

\begin{eqnarray*} \label{lim}
&&\min_{u \in U}  \sum_{\sigma \in \Xi}p(\sigma)\lim_{k \to \infty}\bm{u}\cdot \bm{\zeta}_{\sigma, {T_{\sigma_k}}}^{\pi} \\
&=& \min_{u \in U} \sum_{\sigma \in \Xi}p(\sigma)\liminf_{T_\sigma\to \infty} \bm{u} \cdot \bm{\zeta}_{\sigma, T_\sigma}^{\pi}.
\end{eqnarray*} 
Let us denote 
\begin{eqnarray} \label{lim2}
\lim_{k \to \infty}\bm{\zeta}_{\sigma, {T_{\sigma_k}}}^{\pi} = \bm{\beta}_\sigma^{\pi}, \hspace*{10pt}\forall \sigma \in \Xi
\end{eqnarray}
Where $\bm{\beta}_\sigma^{\pi} \in \mathcal{M}_\sigma$, since $\mathcal{M}_\sigma$ is closed. Hence combining Eqn. \eqref{main_eq22}, \eqref{lim} and Eqn. \eqref{lim2}, we have 
\begin{eqnarray*}
&&\min_{\bm{u} \in U}\liminf_{T \nearrow \infty} \bm{u}\cdot\bigg(\frac{1}{T} \sum_{t=1}^{T}\bm{s}^{\pi}(t,\bm{\sigma}(t))\bigg) \\
&=& \min_{\bm{u} \in U} \sum_{\sigma \in \Xi} p(\bm{\sigma}) \bm{u} \cdot \bm{\beta}^{\pi}_{\sigma} \\
&=& \min_{\bm{u} \in U} \bm{u}\cdot\bigg( \sum_{\sigma \in \Xi} p(\bm{\sigma})\bm{\beta}^{\pi}_{\sigma} \bigg) \hspace*{5pt}\mathrm{w.p.} 1
\end{eqnarray*}
\end{proof}

\subsection{Proof of Lemma \ref{bd_lemma}}\label{bd_proof}
\begin{proof}
\subsubsection{Proof of the Upper-bound}
Note that, for all $\sigma \in \Xi$, we have $E_\sigma \subset E$. Hence, it follows that 
\begin{eqnarray*}
	\mathcal{M}_\sigma \subset \mathcal{M}, \hspace{10pt} \forall \sigma \in \Xi 
\end{eqnarray*}
This in turn implies that 
\begin{eqnarray} \label{conv_hull_eqn}
	\beta_\sigma \in \mathrm{conv}(\mathcal{M}_\sigma) \implies \beta_\sigma \in \mathrm{conv}(\mathcal{M}) 
\end{eqnarray}
Let an optimal solution to Eqn. \eqref{capacity_expr} be obtained at $\big(\bm{\beta}^*_\sigma, \sigma \in \Xi\big) $. Then from Eqn. \eqref{conv_hull_eqn}, it follows that 
\begin{eqnarray*}
	\sum_{\sigma \in \Xi} p(\bm{\sigma})\bm{\beta}^*_\sigma \in \mathrm{conv} (\mathcal{M}) 
\end{eqnarray*}
Hence we have, 
\begin{eqnarray*}
	\max_ {\bm{\beta}_\sigma\in \mathrm{conv}(\mathcal{M}_\sigma)} \min_{j\in V\setminus \{\texttt{r}\}} \bm{u}_j \cdot \big(\sum_{\sigma \in \Xi} p(\bm{\sigma})\bm{\beta}^*_\sigma\big)\\ \leq \max_{\bm{\beta} \in \mathrm{conv}(\mathcal{M})} \min_{j\in V \setminus \{\texttt{r}\}} \bm{u}_j \cdot \bm{\beta}  
\end{eqnarray*}
Using Eqn. \eqref{whole_cap}, this shows that 
\begin{eqnarray*}
\lambda^* \leq \lambda^*_{\mathrm{stat}}
\end{eqnarray*}
This proves the upper-bound. 
\subsubsection{Proof of the Lower-bound}
Since $\mathcal{M}_\sigma \subset \mathcal{M}$,  
the expression for the broadcast-capacity \eqref{capacity_expr} may be re-written as follows:
\begin{eqnarray*}
	\lambda^*=\max_{\bm{\beta}_\sigma \in \mathcal{M}} \min_{j \in V\setminus \{\texttt{r}\}}\sum_{e \in \partial^{\mathrm{in}}(j)} c_e\big(\sum_{\sigma \in \Xi} p(\bm{\sigma}) \bm{\beta}_{\bm{\sigma}}(e) \mathbbm{1}(e \in \bm{\sigma})\big)
\end{eqnarray*}
Let $\bm{\beta}^* \in \mathcal{M}$ be the optimal activation, achieving the RHS of \eqref{whole_cap}. 
Hence we can lower-bound $\lambda^*$ as follows 
\begin{eqnarray*}
	\lambda^* &\geq& \min_{j \in V\setminus \{\texttt{r}\}}\sum_{e \in \partial^{\mathrm{in}}(j)} c_e\bm{\beta}^*(e)\big(\sum_{\sigma \in \Xi} p(\bm{\sigma}) \mathbbm{1}(e \in \bm{\sigma})\big)\\
	&\stackrel{(a)}{=}& p\min_{j \in V\setminus \{\texttt{r}\}}\sum_{e \in \partial^{\mathrm{in}}(j)} c_e\bm{\beta}^*(e)\\
	&=&p \min_{j \in V\setminus \{\texttt{r}\}} \bm{u}_j \cdot \bm{\beta}^*\\
	&\stackrel{(b)}{=}&p \lambda^*_{\mathrm{stat}}
	\end{eqnarray*}
  Equality (a) follows from the assumption \eqref{p_act} and  equality (b) follows from the characterization \eqref{whole_cap}. This proves the lower-bound.

\end{proof}

\subsection{Proof of Corollary \ref{approx-comp}} \label{approx-comp-proof}
Consider the optimal randomized-activation vector $\bm{\beta}^*\in \mathcal{M}$, corresponding to the stationary graph $\mathcal{G}(V,E)$ \eqref{whole_cap}. By Theorem \eqref{algo_comp}, $\bm{\beta}^*$ can be computed in poly-time under the primary interference constraint. Note that, by Caratheodory's theorem \cite{matouvsek2002lectures}, the support of $\bm{\beta}^*$ may be bounded by $|E|$. Hence it follows that $\lambda^*_{\mathrm{stat}}$ \eqref{whole_cap} may also be computed in poly-time.\\  
From the proof of Lemma \eqref{bd_lemma}, it follows that by randomly activating $\bm{\beta}^*$ (i.e., $\bm{\beta}_{\bm{\sigma}}(e)= \bm{\beta}^*(e) \mathbbm{1}(e \in \sigma), \forall \bm{\sigma} \in \Xi$) we obtain a broadcast-rate equal to $p\lambda^*_{\mathrm{stat}}$ where $\lambda^*_{\mathrm{stat}}$ is shown to be an upper-bound to the broadcast capacity $\lambda^*$ in Lemma \eqref{bd_lemma}. Hence it follows that $p\lambda^*_{\mathrm{stat}}$ constitutes a $p$-approximation to the broadcast capacity $\lambda^*$, which can be computed in poly-time.   

\subsection{Proof of Lemma \eqref{stability_lemma}} \label{stability_lemma_proof}

Assume that under the policy $\pi \in \Pi^*$, the virtual queues $X_{j}(t)$ are rate stable i.e., $\lim_{T\to \infty} X_j(T)/T = 0, a.s.$ for all $j$. Applying union-bound, it follows that,
\begin{equation} \label{rate_stability_of_X}
\lim_{T\to \infty} \frac{\sum_{j\neq r}X_j(T)}{T} = 0, \hspace{15pt} \text{w.p.}  \hspace{3pt}1
\end{equation}
Now consider any node $j\neq \texttt{r}$ in the network. We can construct a simple path $p (\texttt{r} = u_k \to u_{k-1} \ldots \to u_1 = j)$ from the source node $\texttt{r}$ to the node $j$ by running the following \texttt{Path construction algorithm} on the underlying graph $\mathcal{G}(V,E)$. 
 \begin{algorithm} 
\caption{$\texttt{r}\to j$ \texttt{Path Construction Algorithm}}
\begin{algorithmic}[1] 
 \REQUIRE DAG $\mathcal{G}(V,E)$, node $j\in V$
 \STATE $i \gets 1$
 \STATE $u_i\gets j$
 \WHILE{$u_i \neq r$} 
 \STATE $u_{i+1} \gets i_t^*(u_i)$;
 \STATE $i \gets i+1$
 \ENDWHILE
 \end{algorithmic}
 \end{algorithm} 
 
At time $t$, the algorithm chooses the parent of a node $u_i$ in the path $p$ as the one that has the least relative packet deficit as compared to $u_i$ (i.e. $u_{i+1}=i_t^*(u_i)$). Since the underlying graph $\mathcal{G}(V,E)$ is a connected DAG (i.e., there is a path from the source to every other node in the network), the above path construction algorithm always terminates with a path $p(\texttt{r}\to j)$. Note that the output path of the algorithm varies with time.\\
 The number of distinct packets received by node $j$ up to time $T$ can be written as a telescoping sum of relative packet deficits along the path $p$, i.e.,
\begin{align}
R_j(T) &= R_{u_1}(T) \notag \\
&= \sum_{i=1}^{k-1}\big(R_{u_i}(T)-R_{u_{i+1}}(T)\big) +R_{u_k}(T)  \notag \\
&= -\sum_{i=1}^{k-1} X_{u_i}(T) + R_{\texttt{r}}(T) \notag \\
&\stackrel{(a)}{=} -\sum_{i=1}^{k-1} X_{u_i}(T) + \sum_{t=0}^{T-1} A(t), \label{eq:116}
\end{align}
where the equality (a) follows the observation that 
\[
X_{u_{i}}(T) = Q_{u_{i+1}u_{i}}(T) = R_{u_{i+1}}(T) - R_{u_{i}}(T).
\]
Since the variables $X_i(t)$'s are non-negative, we have $\sum_{i=1}^{k-1} X_{u_{i}}(t) \leq \sum_{j\neq r} X_{j}(t)$. Thus, for each node $j$
\[
\frac{1}{T}\sum_{t=0}^{T-1} A(t) -\frac{1}{T}\sum_{j\neq r} X_{j}(T) \leq \frac{1}{T} R_j(T) \leq \frac{1}{T}\sum_{t=0}^{T-1} A(t).
\]
Taking limit as $T\to \infty$ and using the strong law of large numbers for the arrival process and Eqn. \eqref{rate_stability_of_X}, we have
\[
 \lim_{T\to \infty} \frac{R_j(T)}{T} = \lambda, \, \forall j. \hspace{10pt} \text{w.p.}\hspace*{5pt}1
\]
This concludes the proof.


\subsection{Proof of Lemma \eqref{stabilizing_X_lemma}} \label{stabilizing_X_lemma_proof}

%
%
We begin with a preliminary lemma. 
\begin{lemma} \label{algebra}
If we have
\begin{equation} \label{eq:108}
Q(t+1)\leq  (Q(t)-\mu(t))^+ + A(t) 
\end{equation}
where all the variables are non-negative and $(x)^+ = \max\{x,0\}$, then
\[
Q^2(t+1) - Q^2(t) \leq \mu^2(t) + A^2(t) + 2Q(t)(A(t)-\mu(t)).
\]
\end{lemma}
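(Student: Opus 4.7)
The plan is to square the recursive inequality and then apply two simple bounds on the truncated quantity $(Q(t)-\mu(t))^+$ to reconstruct the stated drift bound. Write $\tilde{Q}(t) \equiv (Q(t)-\mu(t))^+$, so the hypothesis reads $Q(t+1) \leq \tilde{Q}(t) + A(t)$. Both sides are non-negative (since $\tilde{Q}(t), A(t) \geq 0$), so squaring preserves the inequality:
\[
Q^2(t+1) \leq \tilde{Q}^2(t) + 2\tilde{Q}(t)A(t) + A^2(t).
\]

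The heart of the argument is to replace the two occurrences of $\tilde{Q}(t)$ by convenient expressions, picking a different bound for each. For the quadratic term, I will use $\tilde{Q}^2(t) \leq (Q(t)-\mu(t))^2$: on the event $\{Q(t) \geq \mu(t)\}$ this is an equality, while on the complementary event the left side is zero and the right side is non-negative. For the cross term, I will use the cruder bound $\tilde{Q}(t) \leq Q(t)$, which is immediate from the non-negativity of $\mu(t)$. Substituting both estimates gives
\[
Q^2(t+1) \leq (Q(t)-\mu(t))^2 + 2 Q(t) A(t) + A^2(t).
\]
Expanding $(Q(t)-\mu(t))^2 = Q^2(t) - 2 Q(t)\mu(t) + \mu^2(t)$ and transposing $Q^2(t)$ yields the claim.

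There is no genuine obstacle here beyond a short algebraic manipulation; the only point that deserves care is the pairing of bounds with terms. Bounding $\tilde{Q}^2(t)$ by $Q^2(t)$ rather than by $(Q(t)-\mu(t))^2$ would discard the crucial $-2 Q(t)\mu(t)$ contribution, which is precisely the negative drift term that drives the Lyapunov stability analysis invoked in the proof of Lemma~\ref{stabilizing_X_lemma}.
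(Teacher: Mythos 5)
Your proof is correct and follows essentially the same route as the paper's: square the recursion, bound the squared truncated term by $(Q(t)-\mu(t))^2$ using $(x^+)^2\leq x^2$, and bound the truncated factor in the cross term by $Q(t)$ using $\mu(t)\geq 0$, then expand and rearrange. The pairing of bounds you highlight is precisely the one used in the paper.
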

\begin{proof}
Squaring both sides of Eqn. \eqref{eq:108} yields,
\begin{align*}
&Q^2(t+1) \\
&\leq  \big((Q(t)-\mu(t))^+\big)^2 + A^2(t) + 2 A(t)(Q(t)-\mu(t))^+\\
&\leq  (Q(t)-\mu(t))^2 + A^2(t) + 2 A(t)Q(t),
\end{align*}
where we use the fact that $x^2 \geq {(x^+)}^2$, $Q(t) \geq 0$, and $\mu(t) \geq 0$. Rearranging the above inequality finishes the proof.
\end{proof}
Applying Lemma~\ref{algebra} to the dynamics~\eqref{dyn} of $X_{j}(t)$ yields, for each node $j\neq r$,
\begin{eqnarray} \label{eq:109}
X_j^2(t+1) - X_j^2(t) 
\leq  B(t) + \\
 2 X_j(t) \big(\sum_{m\in V}\mu_{mi_{t}^*}(t)-\sum_{k\in V} \mu_{kj}(t)\big),
\end{eqnarray}
where $B(t)\leq c^2_{\max}+ \max\{a^2(t),c^2_{\max}\} \leq  (a^2(t) + 2c^2_{\max})$, $a(t)$ is the number of exogenous packet arrivals in a slot, and $c_{\max} \triangleq \max_{e\in E} c_e$ is the maximum capacity of the links.
We assume the arrival process $a(t)$ has bounded second moments; thus, there exists a finite constant $B>0$ such that $\mathbb{E}[B(t)] \leq \mathbb{E}\big(a^2(t)\big) + 2c^2_{\max} < B$.

We define the quadratic Lyapunov function $L(\bm{X}(t)) = \sum_{j\neq r} X_j^2(t)$. From~\eqref{eq:109}, the one-slot Lyapunov drift $\Delta(\bm{X}(t))$, \emph{conditioned} on the current network-configuration $\bm{\sigma}(t)$ yields
\begin{eqnarray} \label{drift2}
&&\Delta(\bm{X}(t)|\bm{\sigma}(t)) \triangleq  \mathbb{E}[L(\bm{X}(t+1) - L(\bm{X}(t)) \mid \bm{X}(t), \bm{\sigma}(t)] \nonumber \\
&=& \mathbb{E}\big[\sum_{j\neq r} \big(X_j^2(t+1) - X_j^2(t) \big) \mid \bm{X}(t),\bm{\sigma}(t)\big] \notag \\
&\leq& B|V| +2  \sum_{j\neq r} X_{j}(t) \mathbb{E}\big[\sum_{m\in V}\mu_{mi_{t}^*}(t)\\
&&-\sum_{k\in V} \mu_{kj}(t) \mid \bm{X}(t), \bm{\sigma}(t)\big] \notag \\
&=& B|V| - 2 \sum_{(i,j)\in E} \mathbb{E}[\mu_{ij}(t)\mid\bm{X}(t), \bm{\sigma}(t)] \big( X_j(t)\\
&& - \sum_{k\in K_{j}(t)} X_k(t) \big) \notag \\
&=& B|V|- 2 \sum_{(i,j)\in E}  W_{ij}(t) \mathbb{E}[\mu_{ij}(t)\mid \bm{X}(t),\bm{\sigma}(t)] \label{drift3}
\end{eqnarray}
The broadcast-policy $\pi^{*}$ is chosen to minimize the upper-bound of \emph{conditional-drift}, given on the right-hand side of \eqref{drift3} among all policies in $\Pi^{*}$.

Next, we construct a randomized scheduling policy $\pi^{\text{RAND}} \in\Pi^{*}$. Let $\bm{\beta}_{\bm{\sigma}}^*\in\text{conv}({\mathcal{M}_\sigma})$ be the part of an optimal solution corresponding to $\bm{\sigma}(t)\equiv \bm{\sigma}$  given by Eqn.~\ref{bc_ob}.
From Caratheodory's theorem~\cite{matouvsek2002lectures}, there exist at most $(|E|+1)$ link-activation vectors $\bm{s}_k\in \mathcal{M}_\sigma$ and the associated non-negative scalars $\{\alpha_k^\sigma\}$ with $\sum_{k=1}^{|E|+1}\alpha_k^\sigma=1$, such that 
\begin{equation} \label{beta_star}
\bm{\beta}^*_\sigma= \sum_{k=1}^{|E|+1} \alpha_k^\sigma \bm{s}_k^\sigma.
\end{equation}
Define the average (unconditional) activation vector 
\begin{eqnarray}
\bm{\beta}^*=\sum_{\sigma \in \Xi} p(\sigma)\bm{\beta}^*_\sigma
\end{eqnarray}
Hence, from Eqn. \eqref{bc_ob} we have,
\begin{equation} \label{bc_bound}
\lambda^* \leq \min_{\text{$U$: a proper cut}} \sum_{e\in E_{U}} c_{e} \beta_{e}^{*}.
\end{equation}
Suppose that the exogenous packet arrival rate $\lambda$ is strictly less than the broadcast capacity $\lambda^*$. There exists an $\epsilon >0$ such that $\lambda +\epsilon \leq \lambda^{*}$. From~\eqref{bc_bound}, we have
\begin{equation} \label{eq:113}
\lambda+\epsilon \leq \min_{\text{$U$: a proper cut}} \sum_{e\in E_{U}} c_{e} \beta_{e}^{*}.
\end{equation}
For any network node $v\neq r$, consider the proper cuts $U_{v} = V\setminus \{v\}$. Specializing the bound in \eqref{eq:113} to these cuts, we have
\begin{equation} \label{capacity_exceeding}
\lambda + \epsilon \leq  \sum_{e\in E_{U_{v}}} c_{e} \beta_{e}^{*}, \ \forall v\neq r.
\end{equation}
Since the underlying network topology $\mathcal{G}=(V, E)$ is a DAG, there exists a topological ordering  of the network nodes so that: $(i)$ the nodes can be labelled serially as $\{v_{1}, \ldots, v_{|V|}\}$, where $v_{1}=r$ is the source node with no in-neighbours and $v_{|V|}$ has no outgoing neighbours and $(ii)$ all edges in $E$ are directed from $v_i \to v_j$, $i<j$ ~\cite{algorithms};  From~\eqref{capacity_exceeding}, we define $q_{l}\in[0, 1]$ for each node $v_{l}$ such that 
\begin{equation} \label{q_prob}
q_{l}\, \sum_{e\in E_{U_{v_{l}}}} c_{e} \beta_{e}^{*} = \lambda + \epsilon \frac{l}{|V|},\  l=2, \ldots , |V|.
\end{equation}
Consider the randomized broadcast policy $\pi^{\text{RAND}} \in \Pi^{*}$ working as follows: 
\begin{framed}
\textbf{Stationary Randomized Policy $\pi^{\text{RAND}}$:}\\
(i) If the observed network-configuration at slot $t$ is $\bm{\sigma}(t)=\sigma$, the policy $\pi^{\text{RAND}}$ \textbf{selects} \footnote{\textbf{Selected} does not necessarily mean \textbf{activated}, see point (ii)} the feasible activation set $\bm{s}_k^\sigma$ with probability $\alpha_k^\sigma$;  \\
(ii) For each incoming selected link $e = (\cdot, v_{l})$ to node $v_{l}$ such that $s_{e}(t)=1$, the link $e$ is \textbf{activated} independently with probability $q_{l}$;\\
 (iii) \textbf{Activated} links (note, not necessarily all the \emph{selected} links) are used to forward packets, subject to the constraints that define the policy class $\Pi^{*}$ (i.e., in-order packet delivery and that a network node is only allowed to receive packets that have been received by all of its in-neighbors).
\end{framed}
  Note that this stationary randomized policy $\pi^{\text{RAND}}$ operates independently of the state of received packets in the network, i.e.,  $\bm{X}(t)$. However it depends on the current network-configuration $\bm{\sigma}(t)$.  Since each network node $j$ is relabelled as $v_{l}$ for some $l$, from~\eqref{q_prob} we have, for each node $j\neq r$, the total expected incoming transmission rate to the node $j$ under the policy $\pi^{\text{RAND}}$, averaged over all network states $\bm{\sigma}$ satisfies
\begin{align} 
\sum_{i: (i, j)\in E}\mathbb{E}[\mu^{\pi^{\text{RAND}}}_{ij}(t)\mid\bm{X}(t)] &=\sum_{i: (i,j)\in E} \mathbb{E}[\mu^{\pi^{\text{RAND}}}_{ij}(t)]  \notag \\
&= q_{l}\, \sum_{e\in E_{U_{v_{l}}}} c_{e} \beta_{e}^{*} \notag \\
&=\lambda + \epsilon \frac{l}{|V|}. \label{rate_comp1}
\end{align}
Equation~\eqref{rate_comp1} shows that the randomized policy $\pi^{\text{RAND}}$ provides each network node $j\neq r$ with the total expected incoming rate strictly larger than the packet arrival rate $\lambda$ via proper random link activations conditioned on the current network configuration. According to our notational convention,  we have
\begin{equation} \label{rate_comp2}
\sum_{i:(i,r)\in E} \mathbb{E}[\mu^{\pi^{\text{RAND}}}_{ir}(t)\mid\bm{X}(t)] = \mathbb{E}[\sum_{i:(i,r)\in E} \mu^{\pi^{\text{RAND}}}_{ir}(t)] = \lambda.
\end{equation}
From~\eqref{rate_comp1} and~\eqref{rate_comp2}, if node $i$ appears before node $j$ in the aforementioned topological ordering, i.e., $i = v_{l_{i}} < v_{l_{j}} = j$ for some $l_{i} < l_{j}$, then
\begin{align} 
&\sum_{k:(k,i)\in E}\mathbb{E}[\mu^{\pi^{\text{RAND}}}_{ki}(t)]- \sum_{k:(k,j)\in E}\mathbb{E}[\mu^{\pi^{\text{RAND}}}_{kj}(t)]  \notag \\
&\leq -\frac{\epsilon}{|V|}. \label{rate_comparison_final}
\end{align}
The above inequality will be used to show the throughput optimality of the policy $\pi^*$.\\
The drift inequality~\eqref{drift2} holds for any policy $\pi \in \Pi^*$. The broadcast policy $\pi^{*}$ observes the states $(\bm{X}(t), \bm{\sigma}(t))$ and and seek to \emph{greedily} minimize the upper-bound of drift \eqref{drift3} at every slot. Comparing the actions taken by the policy $\pi^{*}$ with those by the randomized policy $\pi^{\text{RAND}}$ in slot $t$ in~\eqref{drift2}, we have
\begin{align}
&\Delta^{\pi^*}(\bm{X}(t)|\bm{\sigma}(t))\\
& \leq B|V|- 2 \sum_{(i,j)\in E}\mathbb{E}\big[\mu^{\pi^{*}}_{ij}(t) \mid\bm{X}(t), \bm{\sigma}(t)] W_{ij}(t) \notag \\
&\leq B|V|- 2 \sum_{(i,j)\in E}\mathbb{E}\big[\mu^{\pi^{\text{RAND}}}_{ij}(t) \mid\bm{X}(t), \bm{\sigma}(t)] W_{ij}(t) \notag \\
&\stackrel{(*)}{=}  B|V|- 2 \sum_{(i,j)\in E}\mathbb{E}\big[\mu^{\pi^{\text{RAND}}}_{ij}(t) \mid \bm{\sigma}(t)] W_{ij}(t) \notag \\
\end{align}
Taking Expectation of both sides w.r.t. the stationary-process $\bm{\sigma}(t)$ and rearranging, we have
\begin{align}
&\Delta^{\pi^*}(\bm{X}(t))\\
& \leq B|V|- 2 \sum_{(i,j)\in E}\mathbb{E}\big[\mu^{\pi^{\text{RAND}}}_{ij}(t) ] W_{ij}(t) \notag \\
&\leq  B|V| +2 \sum_{j\neq r} X_j(t) \bigg(\sum_{m\in V}\mathbb{E}\big[\mu^{\pi^{\text{RAND}}}_{mi_{t}^*}(t)\big]  -\sum_{k\in V} \mathbb{E}\big[\mu^{\pi^{\text{RAND}}}_{kj}(t)\big]\bigg) \notag \\
&\leq B|V| - \frac{2\epsilon}{|V|} \sum_{j\neq r} X_{j}(t). \label{eq:114}
\end{align}
Note that $i_{t}^*= \arg \min_{i\in \text{In}(j)} Q_{ij}(t)$ for a given node $j$. Since node $i_{t}^{*}$ is an in-neighbour of node $j$, $i_{t}^{*}$ must lie before $j$ in any topological ordering of the DAG. Hence, the last inequality of~\eqref{eq:114} follows directly from ~\eqref{rate_comparison_final}. Taking expectation in~\eqref{eq:114} with respect to $\bm{X}(t)$, we have
\[
\mathbb{E}\big[L(\bm{X}(t+1))\big]-\mathbb{E}\big[L(\bm{X}(t))\big] \leq B|V| -\frac{2\epsilon}{|V|}\mathbb{E}||\bm{X}(t)||_1,
\]
where $||\cdot ||_1$ is the $\ell_1$-norm of a vector. Summing the above inequality over $t=0, 1,2,\ldots T-1$ yields
\[
\mathbb{E}\big[L(\bm{X}(T))\big]-\mathbb{E}\big[L(\bm{X}(0))\big] \leq B|V|T -\frac{2\epsilon}{|V|}\sum_{t=0}^{T-1}\mathbb{E}||\bm{X}(t)||_1.
\]
Dividing the above by $2T\epsilon/|V|$ and using $L(\bm{X}(t))\geq 0$, we have 
\begin{eqnarray*}
\frac{1}{T}\sum_{t=0}^{T-1}\mathbb{E}||\bm{X}(t)||_1 \leq \frac{B|V|^2}{2\epsilon} + \frac{|V|\,\mathbb{E}[L(\bm{X}(0))]}{2T\epsilon}
\end{eqnarray*}
Taking a $\limsup$ of both sides yields
\begin{eqnarray} \label{strong_stability}
\limsup_{T \to \infty}\frac{1}{T}\sum_{t=0}^{T-1} \sum_{j\neq r} \mathbb{E}[X_{j}(t)]  \leq \frac{B|V|^2}{2\epsilon}
\end{eqnarray} 
which implies that all virtual-queues $X_{j}(t)$ are strongly stable \cite{neely2010stochastic}. Strong stability of $X_{j}(t)$ implies that all virtual queues $X_{j}(t)$ are rate stable~\cite[Theorem~$2.8$]{neely2010stochastic}.

\subsection{Proof of Lemma \eqref{pi_class_lemma}} \label{pi_class_lemma_proof}
\begin{proof} 
Recall the definition of the policy-space $\Pi^*$. For every node $i$, since $R_i(t)$ is a \emph{non-decreasing function} of $t$, if a packet $p$ is allowed to be transmitted to a node $j$ at time slot $t$, by the policy $\pi'$, it is certainly allowed to be transmitted by the policy $\pi$. This is because  $R_i'(t) \leq R_i(t), \forall j \in \partial^{\mathrm{out}}(i)$ and hence outdated state-information may only prevent transmission of a packet $p$ at a time $t$, which would otherwise be allowed by the policy $\pi^*$. As a result, the policy $\pi'$ can never transmit a packet to node $j$ which is not present at all in-neighbours of the node $j$. This shows that $\pi' \in \Pi^*$. 
\end{proof}

\subsection{Proof of Lemma \eqref{update_lemma} } \label{update_lemma_proof}
Consider the packet-state update process at node $j$. Since the capacity of the links are bounded by $c_{\max}$, from Eqn. \eqref{delayed_update} and the fact that $R_i(t)$ is non-decreasing, we have 
\begin{eqnarray}
R_i(t) - Tc_{\max}\leq  R_i'(t) \leq R_i(t), \hspace{5pt} \forall i \in \partial^{\mathrm{in}}(j)
\end{eqnarray}
Hence, from Eq. \eqref{X-def}, it follows that
\begin{eqnarray}
X_j(t) - Tc_{\max} \leq X_j'(t) \leq X_j(t) 
\end{eqnarray}
i.e., 
\begin{eqnarray} \label{eq1234}
X_j(t) - c_{\max}\mathbb{E}T \leq \mathbb{E}X_j'(t) \leq X_j(t)
\end{eqnarray}
Where the expectation is with respect to the random update process at the node $j$. 
In a similar fashion,  since every in-neighbour $i$ of a node $k \in \partial^{\mathrm{out}}(j)$, is at most $2$-hop away from the node $j$, we have 
\begin{eqnarray*}
R_i(t)-Tc_{\max} \leq R_i'(t) \leq R_i(t) 
\end{eqnarray*}
Also,
\begin{eqnarray*}
R_k(t) - Tc_{\max} \leq R_k'(t) \leq R_k(t) 
\end{eqnarray*}
It follows that for all $i \in \partial^{\mathrm{in}}(k)$
\begin{eqnarray*}
(R_i(t)-R_k(t)) -Tc_{\max} &\leq & R_i'(t)-R_k'(t) \\
 &\leq & (R_i(t)-R_k(t)) + Tc_{\max} 
\end{eqnarray*}
Hence,
\begin{eqnarray*}
X_k(t)-Tc_{\max} \leq X_k'(t) \leq X_k(t)+Tc_{\max} 
\end{eqnarray*}
Again taking expectation w.r.t. the random packet-state update process, 
\begin{eqnarray} \label{eq1233}
X_k(t)-c_{\max} \mathbb{E}T \leq \mathbb{E} X_k'(t) \leq X_k(t) +c_{\max} \mathbb{E}T
\end{eqnarray}
Combining Eqns \eqref{eq1234} and \eqref{eq1233} using Linearity of expectation and using Eqn. \eqref{weight_comp} we have
\begin{eqnarray*}
 -nc_{\max} \mathbb{E}T+W_{ij}(t)   \leq \mathbb{E}W'_{ij}(t) \leq W_{ij}(t) +nc_{\max} \mathbb{E}T
\end{eqnarray*}
Thus the lemma \eqref{update_lemma} follows with $C\equiv nc_{\max} \mathbb{E}T <\infty $.

\subsection{Proof of Theorem \eqref{pi_dash_optimality_theorem}} \label{pi_dash_optimality_theorem_proof}
To prove throughput-optimality of Theorem \eqref{pi_dash_optimality_theorem}, we work with the same Lyapunov function $L(\bm{X}(t))=\sum_{j \neq \texttt{r}}X_j^2(t) $ as in Theorem \eqref{pi_star_optimality} and follow the same steps until Eqn. \eqref{drift3} to obtain the following upper-bound on conditional drift
\begin{eqnarray}
&&\Delta^{\pi'}(\bm{X}(t)|\bm{X}(t),\bm{X}'(t),\bm{\sigma}(t)) \nonumber \\
&\leq&  B|V|- 2\sum_{(i,j)\in E} W_{ij}(t) \mathbb{E}(\mu^{\pi'}_{ij}(t)| \bm{X}(t), \bm{X}'(t), \bm{\sigma}(t)) \nonumber \\
 \label{eq:rand_drift} 
\end{eqnarray}
Since the policy $\pi'$ makes scheduling decision based on the \emph{locally computed} weights $W'_{ij}(t)$, by the definition of the policy $\pi'$,  we have for any policy $\pi \in \Pi$:
\begin{eqnarray} \label{w2}
\sum_{(i,j)\in E} W'_{ij}(t) \mathbb{E}(\mu^{\pi'}_{ij}(t)| \bm{X}(t), \bm{X}'(t),\bm{\sigma}(t)) \nonumber \\
\geq \sum_{(i,j)\in E} W'_{ij}(t) \mathbb{E}(\mu^{\pi}_{ij}(t)| \bm{X}(t), \bm{X}'(t), \bm{\sigma}(t)) 
\end{eqnarray}
Taking expectation of both sides w.r.t. the random update process $\bm{X}'(t)$, conditioned on the true network state $\bm{X}(t)$ and the network configuration $\bm{\sigma}(t)$, we have 
\begin{eqnarray}
&& Cn^2c_{\max}/2+ \sum_{(i,j)\in E} W_{ij}(t) \mathbb{E}(\mu^{\pi'}_{ij}(t)| \bm{X}(t), \bm{\sigma}(t)) \nonumber \\
&\stackrel{(a)}\geq&\sum_{(i,j)\in E} \mathbb{E}W'_{ij}(t) \mathbb{E}(\mu^{\pi'}_{ij}(t)| \bm{X}(t), \bm{\sigma}(t)) \nonumber \\
&\stackrel{(b)}{\geq} &\sum_{(i,j)\in E} \mathbb{E}W'_{ij}(t) \mathbb{E}(\mu^{\pi}_{ij}(t)| \bm{X}(t), \bm{\sigma}(t))\nonumber \\
&\stackrel{(c)}{\geq}& \sum_{(i,j)\in E} W_{ij}(t) \mathbb{E}(\mu^{\pi}_{ij}(t)| \bm{X}(t), \bm{\sigma}(t)) - Cn^2c_{\max}/2\nonumber \\ \label{e1}
\end{eqnarray}
Here the inequality (a) and (c) follows from Lemma \eqref{update_lemma} and the fact that $|E|\leq n^2/2$ and $\mu_{ij}(t) \leq c_{\max}$. Inequality $(b)$ follows from Eqn. \eqref{w2}. Thus from Eqn. \eqref{eq:rand_drift} and \eqref{e1}, the expected conditional drift of the Lyapunov function under the policy $\pi'$, where the expectation is taken w.r.t. the random update and arrival process is upper-bounded as follows:
\begin{eqnarray*}
\Delta^{\pi'}(\bm{X}(t)|\bm{X}(t),\bm{\sigma}(t)) \leq B'- 2 \sum_{(i,j)\in E}  W_{ij}(t) \mathbb{E}[\mu^{\pi}_{ij}(t)\mid \bm{X}(t),\bm{\sigma}(t)]
\end{eqnarray*} 
with the constant $B'\equiv B|V|+2Cn^2c_{\max}$. Since the above inequality holds for any policy $\pi \in \Pi$, we can follow the exactly same steps in the proof of Theorem \eqref{pi_star_optimality} by replacing an arbitrary $\pi$ by $\pi^{\mathrm{RAND}}$ and showing that it has negative drift. 

\subsection{Proof of Proposition \ref{grid_network_capacity}} \label{grid_network_capacity_proof}
Like many proofs in this paper, this proof also has a converse and an achievability part. In the converse part, we obtain an upper bound of $\frac{2}{5}$ for the broadcast capacity $\lambda^*_{\mathrm{stat}}$ of the stationary grid network (i.e. when all links are ON w.p. $1$). In the achievability part, we show that this upper bound is tight.

\subsubsection*{Part (a): Proof of the Converse: $\lambda^*_{\mathrm{stat}}\leq \frac{2}{5}$} 

We have shown earlier that for the purpose of achieving capacity, it is sufficient to restrict our attention to stationary randomized policies only. Suppose a stationary randomized policy $\pi$ achieves a broadcast rate $\lambda$ and it activates edge $e\in E$ at every slot with probability $f_e$. 
 Then for the nodes $\texttt{a}$ and $\texttt{b}$ to receive distinct packets at rate $\lambda$, one requires
 \begin{eqnarray*}
 f_{\texttt{ra}}\geq \lambda, f_{\texttt{ab}}\geq \lambda 
 \end{eqnarray*}
 Applying the primary interference constraint at node $\mathsf{a}$, we then obtain 
 \begin{eqnarray*} \label{f_ae}
 f_{\texttt{ad}} \leq 1-2\lambda
 \end{eqnarray*}
Because of symmetry in the network topology, we also have 
\begin{eqnarray*}
f_{\texttt{cd}} \leq 1-2\lambda. 
\end{eqnarray*}
However, to achieve a broadcast capacity of $\lambda$, the total allocated rate towards node $\mathrm{d}$ must be atleast $\lambda$. Hence we have,

\begin{eqnarray*}
2(1-2\lambda) \geq \lambda
\end{eqnarray*}
i.e. 
\begin{eqnarray*}
\lambda \leq \frac{2}{5}.
\end{eqnarray*}
Since the above holds for any stationary randomized policy $\pi$, we conclude 
\begin{eqnarray} \label{cap_ub_ex}
\lambda^*_{\mathrm{stat}}\leq \frac{2}{5}  
\end{eqnarray}
$\blacksquare$

\begin{figure} [!ht] 
\centering
\begin{minipage}{\textwidth}
\begin{overpic}[width=0.25\textwidth]{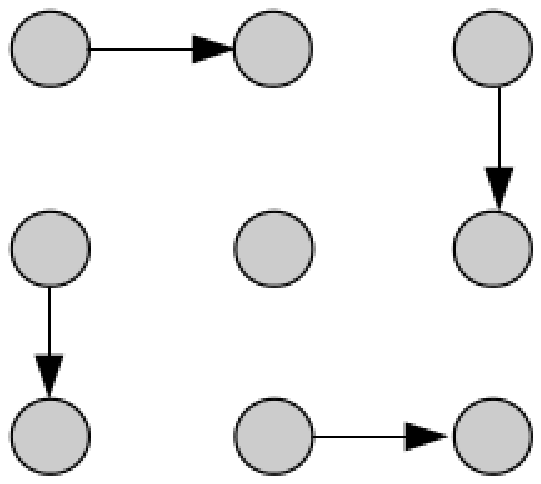}
\put(25,-5){Matching $M_1$}
\put(17,73){\texttt{r}}
\put(47,73){\texttt{a}}
\put(77,73){\texttt{b}}
\put(17,46){\texttt{c}}
\put(47,46){\texttt{d}}
\put(77,46){\texttt{e}}
\put(17,21){\texttt{f}}
\put(47,21){\texttt{g}}
\put(77,21){\texttt{h}}
\end{overpic}
%
 \begin{overpic}[width=0.25\textwidth]{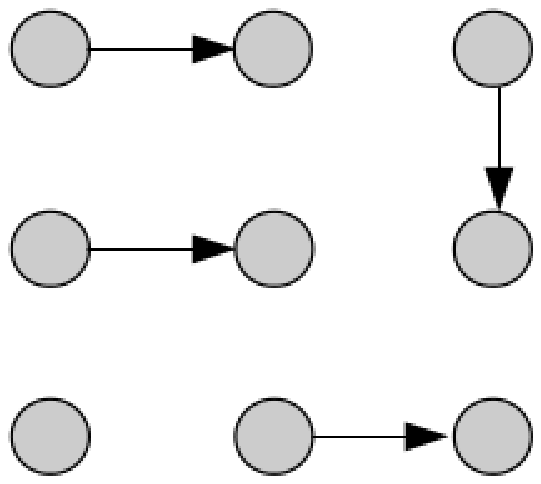}
  \put(25,-5){Matching $M_2$}
 \put(17,73){\texttt{r}}
\put(47,73){\texttt{a}}
\put(77,73){\texttt{b}}
\put(17,46){\texttt{c}}
\put(47,46){\texttt{d}}
\put(77,46){\texttt{e}}
\put(17,21){\texttt{f}}
\put(47,21){\texttt{g}}
\put(77,21){\texttt{h}}
  \end{overpic}
  \end{minipage}
\begin{minipage}{\textwidth}

 \begin{overpic}[width=0.25\textwidth]{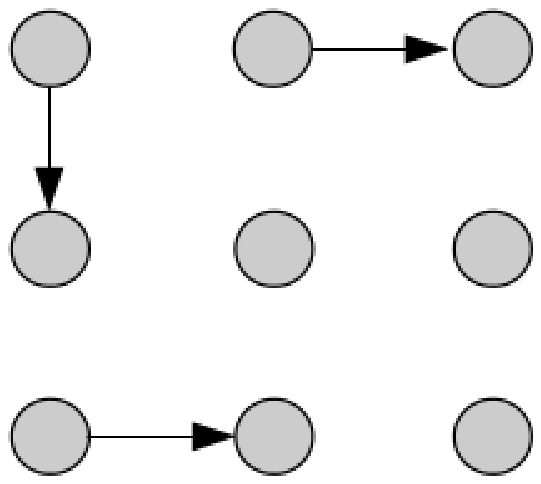}
  \put(25,-5){Matching $M_3$}
  \put(17,73){\texttt{r}}
\put(47,73){\texttt{a}}
\put(77,73){\texttt{b}}
\put(17,46){\texttt{c}}
\put(47,46){\texttt{d}}
\put(77,46){\texttt{e}}
\put(17,21){\texttt{f}}
\put(47,21){\texttt{g}}
\put(77,21){\texttt{h}}
  \end{overpic}
  \begin{overpic}[width=0.25\textwidth]{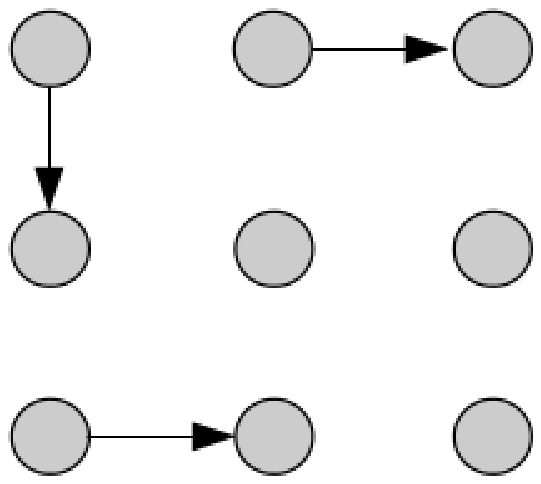}
  \put(25,-5){Matching $M_4$}
  \put(17,73){\texttt{r}}
\put(47,73){\texttt{a}}
\put(77,73){\texttt{b}}
\put(17,46){\texttt{c}}
\put(47,46){\texttt{d}}
\put(77,46){\texttt{e}}
\put(17,21){\texttt{f}}
\put(47,21){\texttt{g}}
\put(77,21){\texttt{h}}
  \end{overpic}
 
\end{minipage}

\begin{minipage}{\textwidth}
\begin{overpic}[width=0.25\textwidth]{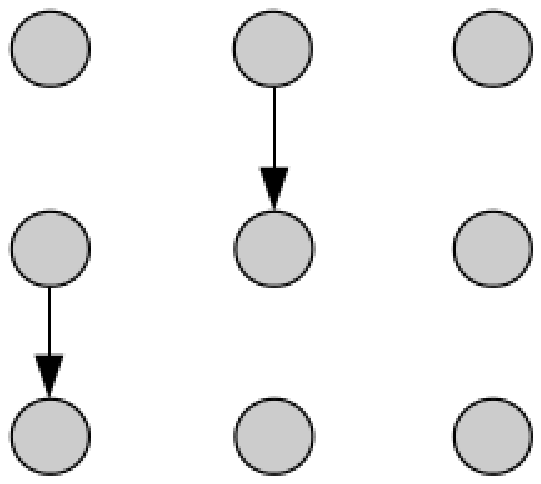}
  \put(25,-5){Matching $M_5$}
  \put(17,73){\texttt{r}}
\put(47,73){\texttt{a}}
\put(77,73){\texttt{b}}
\put(17,46){\texttt{c}}
\put(47,46){\texttt{d}}
\put(77,46){\texttt{e}}
\put(17,21){\texttt{f}}
\put(47,21){\texttt{g}}
\put(77,21){\texttt{h}}
  \end{overpic}
 \begin{overpic}[width=0.25\textwidth]{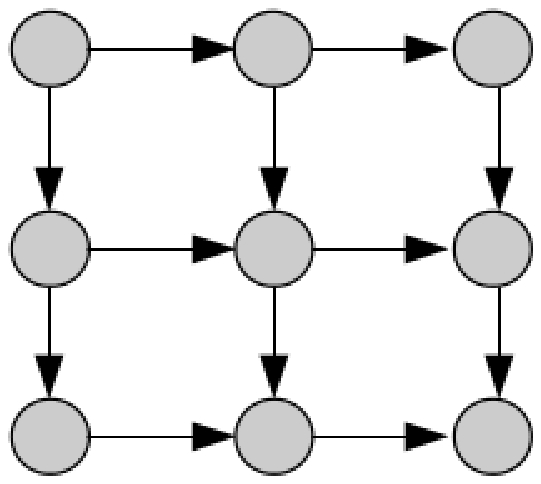}
  \put(7,-5){`Time averaged' Network}
  \put(17,73){\texttt{r}}
\put(47,73){\texttt{a}}
\put(77,73){\texttt{b}}
\put(17,46){\texttt{c}}
\put(47,46){\texttt{d}}
\put(77,46){\texttt{e}}
\put(17,21){\texttt{f}}
\put(47,21){\texttt{g}}
\put(77,21){\texttt{h}}
\put(60,15){$\frac{2}{5}$}
\put(30,15){$\frac{2}{5}$}
\put(30,40){$\frac{1}{5}$}
\put(30,80){$\frac{2}{5}$}
\put(51,60){$\frac{1}{5}$}
\put(82,60){$\frac{2}{5}$}
\put(60,80){$\frac{2}{5}$}
\put(10,60){$\frac{2}{5}$}
\put(10,33){$\frac{2}{5}$}
\put(51,33){$0$}
\put(73,33){$0$}
\put(61,42){$0$}
  \end{overpic}
  
  \vspace*{15pt}
 
\end{minipage}
\caption{Some feasible activations of the $3 \times 3$ grid network which are activated uniformly at random. The components corresponding to each edge in the resulting overall activation vector $\bm{\beta}$ is denoted by the numbers alongside the edges. }
  \label{matching_fig}
\end{figure}
\subsubsection*{Part (b): Proof of the Achievability: $\lambda^*_{\mathrm{stat}}\geq \frac{2}{5}$:} As usual, the achievability proof will be constructive. Consider the following five activations (matchings) $M_1,M_2,\ldots, M_5$ of the underlying graph as shown in Figure \ref{matching_fig}. Now consider a stationary policy $\pi^* \in \Pi^*$ that activates the matchings $M_1,\ldots, M_5$ at each slot uniformly at random with probability $\frac{1}{5}$ for each matching. The resulting `time-averaged' graph is also shown in Figure \ref{matching_fig}. Using Theorem \ref{cap_th}, it is clear that $\lambda^*_{\mathrm{stat}} \geq \frac{2}{5}$. Combining the above with the converse result in Eqn. \eqref{cap_ub_ex}, we conclude that, 
\begin{eqnarray*}
\lambda^*_{\mathrm{stat}}=\frac{2}{5}
\end{eqnarray*} 
$\blacksquare$
\end{document}